\documentclass[runningheads]{llncs}
\usepackage[T1]{fontenc}
\usepackage{graphicx}
\usepackage{amsmath,amsfonts,amssymb}
\usepackage{cleveref}
\usepackage{xcolor}

\newcommand\bids[0]{\mathbf{b}}

\makeatletter
\def\@fnsymbol#1{\ensuremath{%
  \ifcase#1\or *\or \dagger\or \ddagger\or \mathsection\or
  \mathparagraph\or \|\or **\or \dagger\dagger\or \ddagger\ddagger
  \else\@ctrerr\fi}}
\makeatother

\begin{document}
\title{Beyond Winner-Take-All Procurement Auctions}
%
%
\author{
Pranav Garimidi\textsuperscript{$*$} \and
Michael Neuder\textsuperscript{$\dagger$} \and
Tim Roughgarden\textsuperscript{$\ddagger$} \\
\vspace{-2em}
}

\institute{}

\authorrunning{Garimidi et al.}

\maketitle              
\footnotetext[1]{a16z crypto. Email: pgarimidi@a16z.com.}
\footnotetext[2]{Princeton University. Author supported in part by the Ethereum Foundation: Grant FY25-2276. Email: michael.neuder@princeton.edu.}
\footnotetext[3]{Columbia University \& a16z crypto. Author’s research at Columbia University supported in part by NSF awards CCF-2006737 and CNS-2212745. Email: tim.roughgarden@gmail.com.}

\begin{abstract}
Blockchain protocols often seek to procure computationally challenging work from a decentralized set of participants. While there are simple procurement auctions that result in the minimal cost of acquisition and maximal efficiency, they also lead to concentration in the provider set due to the winner-take-all market structure. We design and analyze single-good procurement auctions that balance social-cost minimization (at the extreme, a winner-take-all auction) with decentralization (at the extreme, a uniform allocation). We first give a dominant-strategy incentive-compatible (DSIC) mechanism explicitly designed to implement non-winner-take-all allocations. 
Our allocation rule uniquely solves an optimization with respect to a modified social-cost metric that penalizes large, single-player concentrations and is parameterized with a curvature value, $\alpha$, with $\alpha \rightarrow 0$ implementing the uniform allocation and $\alpha \rightarrow \infty$ implementing the winner-take-all allocation. We further quantify the loss in social cost of this mechanism as a function of $\alpha$.


$\qquad$ We then propose two alternative mechanisms, each addressing a limitation of the DSIC mechanism, namely a lack of Sybil-resistance and a complex payment rule. First, we examine a variation of Tullock contests to achieve a non-winner-take-all Sybil-proof procurement mechanism. Second, we consider a mechanism with the same allocation rule as the DSIC mechanism but with an alternative payment rule in which producers are simply paid proportionally to their bids. This provides a much simpler payment rule which, while not DSIC, still results in the mechanism being ex-post ``safe'' (where there exists a bidding strategy that is guaranteed to result in non-negative utility) for participating bidders. For both non-DSIC mechanisms, we characterize the equilibrium allocations and prove price of anarchy bounds.
\end{abstract}%
\section{Introduction}

Many blockchain-related protocols face {\em procurement problems}, in which the protocol must secure services from a permissionless set of providers. For example, layer-one blockchain protocols require validators to secure the network, prover marketplaces gather provers to fulfill requests for succinct proofs, and decentralized inference networks recruit GPU providers to carry out queries. 
Procurement problems are well-studied in the traditional mechanism design literature, where the focus has generally been 
on \emph{efficient} mechanisms that seek to allocate work to parties that can complete the work at the lowest cost.
However, procurement mechanisms that focus solely on minimizing costs lead to \emph{winner-take-all} outcomes, where the single most efficient service provider consistently wins all the work. 

In contrast, blockchain-related applications often additionally require a ``sufficiently decentralized'' provider set. 
For example, a typical PBFT-style consensus protocol requires a sufficiently large number of distinct validators in order to be confident that fewer than one-third of them are faulty. 
Similarly, a prover or inference network requires a sufficient number of active providers to ensure that a single one cannot cause a liveness failure for the protocol's consumers. 

Winner-take-all procurement encourages centralization, with all but the most efficient provider leaving the market or consolidating with the most efficient one. An end state with a single provider would threaten the security assumptions of many blockchain-related applications, and more generally, can be economically undesirable for several reasons:

\begin{itemize}
    \item Once a single service provider has crowded out their competition, it can price-gouge by charging monopoly prices.
    \item Having only a single service provider makes the procurer vulnerable to the provider failing for any number of reasons. 
    \item With no competition, there are no incentives for a single service provider to make investments in decreasing their cost.
    \item In case the type of work demanded changes in the future, it is favorable to have multiple providers with different specialties available. 
    \item Once a single service provider dominates the market, it can undercut costs whenever a new participant attempts to enter the market, defending their monopoly. 
\end{itemize}

In practice, blockchain-related protocols are often designed to maintain a sufficiently decentralized set of providers. For instance, in addition to acting as Sybil-resistance mechanisms, proof-of-work and proof-of-stake convert earning consensus rewards into a Tullock contest, a mechanism in which many nodes participate at equilibrium instead of only the most efficient one~\cite{arnosti2022bitcoin}. Similarly, 
in prover markets, proposals have been made to explicitly trade off efficiency for a more robust set of providers~\cite{succinct}. These examples demonstrate the importance of understanding the design space of procurement mechanisms that go beyond pure cost-minimization. As blockchain protocols and their applications expand to new contexts, these mechanisms will become increasingly important. This work initiates the exploration of this design space 
through the study of three natural mechanisms and their properties, detailed in the next section.

\subsection{Summary of results}



This work is an applied modeling contribution that serves to motivate and discuss trade-offs in the study of procurement in blockchain settings. We model the procurement game and define a set of natural properties that practitioners may find desirable, along with giving mechanisms for achieving different subsets of these properties. We study three mechanisms under slightly different models, which we discuss in more detail below, but summarize here for clarity. We start by considering an incomplete information setting where bidders' costs are private and examine a truthful mechanism that explicitly implements a non-winner-take-all allocation. This mechanism, however, has a couple of clear downsides: the payment rule is both difficult for bidders to interpret and for the protocol to implement, and is not Sybil-proof. Additionally, this mechanism can result in arbitrarily high payments for the protocol, but we give a modification that allows the protocol to bound its costs with the downside of a potentially more centralized allocation. To address these downsides, we consider two additional mechanisms. 
Both mechanisms are non-truthful, and thus we move from the incomplete to the complete information setting, where bidders' costs are common knowledge, to analyze the mechanisms' pure Nash equilibria. The first of these mechanisms is essentially a Tullock contest which is Sybil-proof and allows the protocol to control the total payments it gives out. However, this mechanism may still be difficult for bidders to participate in and may even leave bidders with negative utilities for participating. The second mechanism we consider uses the same allocation rule as the truthful mechanism but with a paid-as-bid payment rule, where bidders are paid exactly proportional to their bids and the amount of work they were allocated. This mechanism is comparatively easier for bidders to reason about, with the downside of not being Sybil-proof and potentially resulting in arbitrarily high payments. A reserve price can be used to cap the total payments, but we only analyze the equilibria without a reserve.


We view the present work as an important first step in discussing the trade-offs and desiderata of decentralized procurement mechanisms, while stopping short of proposing a definitive procurement mechanism for protocols to use. The mechanisms we introduce are analyzed under different informational assumptions and thus not directly comparable. Nevertheless, each of the three mechanisms captures important features of interest to protocol designers and can serve as building blocks for the future. We leave a unified model and a characterization of an optimal mechanism for this setting as important future work. 




\Cref{sec:dsic} introduces $\alpha$-proportional allocation rules ($\alpha$-PARs) which implement allocations minimizing an $\alpha$-scaled social cost metric that penalizes allocations with large single-player concentrations, explicitly optimizing for non-winner-take-all allocations. These allocation rules are monotone and thus truthfully implementable via the corresponding Myersonian payment rules. We characterize the trade-off between social cost (where winner-take-all, implemented by taking $\alpha \rightarrow \infty$, is maximally efficient) and ``decentralization'' (where the uniform allocation rule, implemented by taking $\alpha \rightarrow 0$, is maximally decentralized) for these mechanisms by showing the social cost scales in the number of players, $n$, and the protocol-selected $\alpha$ parameter as $1+(n/\alpha)^{1/\alpha}$.  
This DSIC mechanism has two principal drawbacks that, depending on the details of the application, motivate considering alternative procurement mechanisms. First, the mechanism is not Sybil-proof. Second, its payment rule---the unique payment rule that can be coupled with an $\alpha$-PAR to produce a DSIC mechanism---is complex to implement and hard to interpret for bidders. 

\Cref{sec:tullock} presents a Sybil-proof mechanism that we call a Tullock procurement contest. We shift focus here to the complete-information setting (the standard setting for the analysis of Tullock contests), in which each supplier's cost of production is common knowledge. 
This assumption is reasonable in the context of mechanisms that are run repeatedly with public bids (as in the case of many blockchain-related applications), as market participants may learn each other's costs over time. (See \cite{edelman2007internet} for a more extended discussion of this point in the context of the complete-information analysis of generalized second-price auctions for online advertising.)
Here, we prove the existence and uniqueness of PNE in the game induced by a Tullock procurement contest. We then bound the social cost of the PNE, and show that as the auctioneer increases their budget of rewards, the social cost at equilibrium increases, but the allocation becomes more evenly distributed across providers. 
In particular, as the reward budget grows increasingly large, with $n$ competing bidders, their allocations converge to $1/n$. 

\Cref{sec:PAB} steps back and instead uses a simple to implement and interpret ``paid-as-bid'' payment rule with the same $\alpha$-PAR allocations as the DSIC mechanism. Beyond simplicity, any participant who bids at least their true cost is guaranteed non-negative utility, no matter what the other participants bid (we call this property ex-post safety). Unfortunately, this mechanism is not Sybil-proof. We prove the existence and uniqueness of PNE for this mechanism. Further, we implicitly characterize the equilibrium allocations and the associated worst-case cost vectors. We parameterize this class of mechanisms by $\alpha$, where as $\alpha$ grows increasingly large, the mechanism concentrates an increasing amount of the allocation between the lowest-cost bidders. We show that at equilibrium, every agent's allocation is upper bounded by $1-1/\alpha$. Additionally, we provide numerical results demonstrating how the price of anarchy scales with the number of players and the disparity between their respective costs. 
For example, we show that with 16 participants that are no worse than two-times more costly than the best prover, the price of anarchy of the $\alpha=4$ mechanism is no more than 1.4 (i.e., a 40\% degradation in the social cost). \Cref{sec:conclusion} concludes and outlines future work.

\subsection{Related Work}

\paragraph*{Procurement auctions.}
Procurement or ``reverse'' auctions have a rich literature, finding numerous applications in practice. The breadth of topics is too deep to cover here (see \cite{dimitri2006handbook} for a full treatment), so we focus on settings where the auctioneer sacrifices minimizing their cost to ensure more equitable outcomes.
\cite{dynamic_procurement} considers a setting similar to ours, where they model the cost to the auctioneer of agents leaving the auction if they are not allocated enough work. To address this, they study a threshold allocation rule where the lowest $k$ producer bids are each allocated exactly $1/k$ of the production rights. They prove that setting a lower threshold (thus effectively increasing $k$) can lead to higher revenue in the repeated game despite not being revenue maximizing in the one-shot game (where $k=1$ would be optimal). \cite{che1993design} studies multi-dimensional procurement auctions, where the price and quality of the produced good are considered. This opened a line of literature considering non-price features when deciding how to award government contracts and was grounded in how the Department of Defense handled contracts and negotiations. \cite{kang2022winning} extend this to consider that the US government restricts entry in procurement auctions for many reasons: (i) regulatory restrictions on the producers (e.g., domestic vs. foreign), (ii) small business concerns (e.g., promoting and supporting allocation to businesses that otherwise wouldn't be competitive), and (iii) discretionary reasons, a catchall. The ``small business concerns'' are most relevant to our work and were also studied explicitly in \cite{nakabayashi2013small}, which demonstrated that small business participation in procurement auctions would be reduced by 40\% without a significant portion of the budget being set aside for them. Additionally, they show that the procurement cost would increase significantly without the increased competition facilitated by the small business subsidy.

\paragraph*{Supply-side bidding and supply function equilibria.} 
Our work relates to the operations research literature of firms competing to produce some supply of a homogeneous good, also known as Cournot competition. Most notably, \cite{klemperer1989supply} generalizes Cournot's model to allow producers to specify a more general class of ``supply functions,'' which specify the amount of output the firm will produce as a function of price. They characterize the equilibria of these supply functions under uncertainty of the realized demand shocks. In other words, they answer the question, ``How do firms with identical costs but uncertainty around demand determine how to report their supply functions in equilibria?'' \cite{baldick2004theory} studies supply function equilibria when limiting the producers to linear supply functions, while \cite{karpowicz2007characterization} limits even further to only consider scalar supply functions. Both of these models are similar to ours in that the reported values resemble the per-unit cost of production that the producers strategize over, but our allocation rule is distinct.

\cite{johari2004thesis} and \cite{johari2011parameterized} are the most relevant to our work. These works examine supply-side bidding in the ``inelastic demand'' setting, where the demand is relatively fixed, and the producers have asymmetric costs. \cite{johari2004thesis} points out that the simple mechanism that pays each bidder exactly their bid and allocates the resource proportionally (exactly the reverse of the Tullock contest) cannot have a Nash equilibrium, as the bids can grow to infinity while each firm's respective cost remains bounded, and uses this to motivate a constrained set of supply functions. They then solve for the optimal allocation if the firms are price takers, along with the Nash equilibrium when the firms are price anticipating, thus calculating the efficiency loss of the strategic setting. Our work uses the same problem statement but differs in our allocation and payment rules.

\section{Model} \label{sec:model}
We consider a standard procurement setting where there is a divisible unit of work to be assigned to suppliers. There are $n$ competing agents, indexed by $i$, where each agent incurs a cost linear in their allocation parameterized by $c_i$. This cost can be interpreted as the resource cost agent $i$ has to pay to complete a given amount of work. 
Let $\mathbf{c} = (c_1, c_2, \ldots, c_n)$ denote the vector of agent costs. 
We consider both the \emph{incomplete-information} case, where costs are only privately known with competing bidders having no prior knowledge of the costs of their competitors, and the \emph{complete-information} case, where $\mathbf{c}$ is common knowledge to all bidders. 


A procurement mechanism $\mathcal{M}$ consists of an allocation and payment rule ($x,p$). The allocation rule determines how the work is split among the bidders, and the payment rule dictates how much each bidder gets paid. Each agent $i$ simultaneously submits a bid $b_i$, or potentially multiple bids, to the mechanism. We denote the bid vector $\bids = (b_1,b_2,...,b_n)$ and use $\mathbf{b}_{-i}$ to denote the bid vector excluding bidder $i$'s bid. Agents have standard quasi-linear utilities. 

\begin{definition}[Player utility function]\label{def:playerutility}
The player utilities are their payments less the cost incurred from their allocation, $u_i(\bids) = p_i(\mathbf{b}) - c_i x_i(\mathbf{b})$.
\end{definition}

Based on the information agents have about each other, we consider different equilibrium concepts. In the incomplete-information case, where bidders don't have information about each other's costs, we insist that the mechanism be dominant strategy incentive compatible (DSIC), as otherwise, agents lack information on how they should bid. 

\begin{definition}[Dominant-strategy incentive-compatible] A mechanism is DSIC if for every bidder $i$, for all $b_i\ge 0$, and all bid vectors $\mathbf{b}_{-i}$, $u_i(c_i,\mathbf{b}_{-i})\ge u_i(b_i,\mathbf{b}_{-i})$
\end{definition}

In the complete-information setting, where the full cost vector $\mathbf{c}$ is common knowledge to all the bidders, we relax our solution concept to pure-strategy Nash equilibria (PNE).  


\begin{definition}[PNE] A bid vector $\widetilde{\bids}$ is a PNE, if for every bidder $i$, and all alternate bids $b'_i\ge 0$, $u_i(\widetilde{\bids}) \ge u_i(b'_i,\widetilde{\bids}_{-i})$
\end{definition}

We further include two ideal properties for mechanisms to satisfy. First, we target Sybil-proof mechanisms, where agents cannot increase their utility by submitting bids under multiple identities. While this problem can be addressed by the auctioneer knowing the bidder's identities in traditional settings, this property is especially important for permissionless mechanisms where the protocol might not have any information on who the bidders are. Formally, a mechanism is Sybil-proof if every bidding strategy that submits multiple bids is weakly dominated by a bidding strategy that only submits a single bid. 

\begin{definition}[Sybil-proof]
    A mechanism is Sybil-proof, if for all bidding strategies $\sigma_i(\mathbf{c})$, potentially submitting multiple bids, there exists an alternative bidding strategy $\sigma'_i(\mathbf{c}):\mathbb{R}_+^{n}\rightarrow\mathbb{R}_+$ such that for all other bids $\mathbf{b}_{-i}$,  $u_i(\sigma'_i(\mathbf{c}),\mathbf{b}_{-i}) \ge u_i(\sigma_i(\mathbf{c}),\mathbf{b}_{-i})$
\end{definition}


Second, we desire mechanisms that are ex-post ``safe,'' where bidders have a non-trivial bidding strategy (i.e., the strategy doesn't guarantee an agent zero utility regardless of others' bids) under which they never regret participating in the auction. This is similar in spirit to individual rationality, where bidders are certain to have non-negative utility from reporting truthfully, but extended to settings where the bidders are playing strategically. For example, a standard first-price procurement auction is ex-post safe, since as long as an agent makes a bid above their cost, they are guaranteed to have a non-negative utility. Mechanisms that are not ex-post safe have higher barriers to entry, as bidders have to carefully reason about other bidders' strategies to avoid losing money and thus may not enter at all. 

\begin{definition}[Ex-post safe]
    A mechanism is ex-post safe, if for every agent $i$, there exists a bidding strategy $\sigma({c_i})$ such that for all bid vectors $\mathbf{b}_{-i}$, $u_i(\sigma(c_i),\mathbf{b}_{-i})\ge 0$ and there exists a bid vector $\mathbf{b}_{-i}$ such that $u_i(\sigma(c_i),\mathbf{b}_{-i}) > 0$.
\end{definition}


We judge the efficiency of a mechanism by the social cost it incurs, i.e., the total cost incurred by all agents who were allocated work. This is the objective procurement mechanism that typically seeks to minimize. While our mechanisms have goals beyond minimizing social cost, we still use it as a benchmark to quantify \emph{how} much efficiency our mechanisms give up for the sake of a more robust supplier set.  

\begin{definition}[Social cost]
    The social cost of an allocation is the total cost incurred by the agents in an allocation, $SC(\mathbf{c}, \mathbf{x}) = \mathbf{c} \cdot \mathbf{x} = \sum_{i=1}^n c_ix_i$. 
\end{definition}

To quantify how efficient a given mechanism is, we benchmark the social cost under equilibrium bidding versus the optimal social cost achievable. This ratio is referred to as the price of anarchy. Note that in our model, the optimal social cost will always be to allocate all the work to the agent with the lowest cost. 

\begin{definition}[Price of anarchy (PoA)] Let $\mathcal{X}$ be the set of all allocations and $\mathcal{X}_{eq}$ be the set of allocations induced at all pure Nash equilibria (PNE). The price of anarchy is the ratio of the highest-cost equilibrium allocation to the lowest-cost allocation,
\begin{align*}
    PoA = \frac{\max_{\mathbf{x} \in \mathcal{X}_{eq}} SC(\mathbf{c}, \mathbf{x})}{\min_{\mathbf{x} \in \mathcal{X}} SC(\mathbf{c}, \mathbf{x})}.
\end{align*}
\end{definition}

We evaluate the three procurement mechanisms in  \Cref{sec:dsic}, \Cref{sec:tullock}, \& \Cref{sec:PAB}, respectively, along these axes. 

\section{DSIC Mechanism}\label{sec:dsic}
We first consider the incomplete-information setting, where agent costs are private. We start by defining a class of allocation rules parameterized by $\alpha$,  closely resembling those of a Tullock contest \cite{tullock2008efficient} with negative exponents, which seek to explicitly allocate work to higher-cost agents. 

\begin{definition}[$\alpha$-PARs]\label{def:allocation_rule}
    An $\alpha$-\textit{proportional allocation rule} takes the form,
    \begin{align*}
        x_i(\mathbf{b}) = \frac{b_i^{-\alpha}}{\sum_{j=1}^n b_j^{-\alpha}},
    \end{align*} 
    for $\alpha > 0$.
\end{definition}


Recall in the procurement setting that bids correspond to costs, so bidding \textit{higher} results in a \textit{lower} allocation, thus the need for a \textit{monotone decreasing} allocation rule in $b_i$. Given that the allocation rule is monotone, we can implement it with a DSIC mechanism using Myerson's lemma \cite{myerson1981optimal} using the standard normalization that $\lim_{b_i \to \infty}p_i(b_i,\mathbf{b}_{-i}) = 0$.
Letting $D_i=\sum_{j\neq i} b_i^{-\alpha}$, the integral form of the payment rule is,
\begin{align*}
    p_i(b_i, \bids_{-i}) = \frac{b_i}{1+D_ib_i^{\alpha}} + \int_{b_{i}}^{\infty} \frac{1}{1+D_it^{\alpha}}dt. 
\end{align*}
This payment rule represents the amount that the protocol needs to \textit{pay} the agents to elicit truthful reports $\bids$. Note that under this mechanism, the protocol can be forced to pay an arbitrarily large amount. To cap the total payment, we can modify the mechanism to only allocate work to the $k$ lowest bids with a maximum bid of $\bar{b}$ and the payment rule modified accordingly. This also improves the social cost of the mechanism as allocation gets moved only towards the agents with the lowest costs. Thus, social cost and total payments can be traded off with decentralization according to the mechanism designer's preference. The following analysis can then be interpreted as assuming $c_i \le \bar{b}$ for all $i$ and taking $n=k$.  


We quantify exactly how these allocation rules trade off efficiency for distribution by showing that $\alpha$-PARs are the unique solutions to minimizing the objective $\sum_{i=1}^n c_ix_i^{1+1/\alpha}$ via a standard Lagrangian argument (See \Cref{app:dsic-proof} for details).

\begin{lemma}[Optimization problem]\label{lemma:opt}
	The allocation rule $x_i = c_i^{-\alpha} / \sum_{j=1}^n c_j^{-\alpha}$ is the solution to the following constrained optimization problem,
	\begin{align*}
		\min_{\mathbf{x} \in [0, 1]^n}&\sum_{i=1}^n c_ix_i^{1+1/\alpha}\\
		\text{s.t.,} &\sum_{i=1}^n x_i = 1.
	\end{align*}
\end{lemma}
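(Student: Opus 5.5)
The plan is to treat the problem as a convex program and solve it with Lagrangian/KKT conditions. Assume $c_i>0$ for all $i$ (otherwise the allocation rule is undefined). Write $f(\mathbf{x})=\sum_i c_i x_i^{1+1/\alpha}$. Since $\alpha>0$, the exponent $p:=1+1/\alpha$ exceeds $1$. Each map $x\mapsto c_i x^{p}$ is therefore strictly convex on $[0,1]$, and so $f$ is strictly convex on the convex compact feasible set $\{\mathbf{x}\in[0,1]^n:\sum_i x_i=1\}$. Continuity and compactness give existence of a minimizer, and strict convexity gives uniqueness. It then suffices to show that the proposed $\mathbf{x}$ satisfies the first-order (KKT) optimality conditions, which are sufficient for a convex problem.

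First, I would show that the box constraints are not binding, so only the equality constraint needs a multiplier. The upper bound $x_i\le 1$ is implied by $\sum_j x_j=1$ together with $x_j\ge 0$. For the lower bound, I would check that no optimum has some $x_i=0$. The marginal cost $\partial f/\partial x_i = c_i p\, x_i^{1/\alpha}$ vanishes at $x_i=0$, while it is strictly positive at any coordinate with $x_k>0$. Shifting a small mass $\epsilon$ from such a $k$ to $i$ therefore strictly decreases $f$ to first order. Hence the minimizer is interior, i.e.\ $x_i>0$ for all $i$.

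Next comes the stationarity step. Form $\mathcal{L}(\mathbf{x},\lambda)=\sum_i c_i x_i^{p}-\lambda(\sum_i x_i-1)$. Setting $\partial\mathcal{L}/\partial x_i=0$ gives $c_i(1+1/\alpha)x_i^{1/\alpha}=\lambda$ for every $i$, so $x_i=\bigl(\lambda/((1+1/\alpha)c_i)\bigr)^{\alpha}\propto c_i^{-\alpha}$. Imposing $\sum_i x_i=1$ pins down the constant and yields $x_i=c_i^{-\alpha}/\sum_j c_j^{-\alpha}$. This point lies in $(0,1)^n$, so it is feasible. For a convex objective under a linear constraint, a feasible point satisfying stationarity is the global minimizer, and by strict convexity it is the unique one.

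The main subtlety is the boundary: one must justify that the solution is not at a vertex and that the Lagrangian with only the equality multiplier suffices. The marginal-cost argument above handles this. Alternatively, one can write the full KKT system with multipliers $\mu_i\ge 0$ for $x_i\ge 0$ and verify that setting $\mu_i=0$ works at the proposed point; by convexity this alone certifies optimality without needing the interiority argument.
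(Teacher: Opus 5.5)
Your proposal is correct and follows the same route as the paper: form the Lagrangian with a single multiplier for $\sum_i x_i = 1$, solve stationarity to get $x_i \propto c_i^{-\alpha}$, and normalize (the paper divides the objective by $1+1/\alpha$, which only rescales $\lambda$). Your strict-convexity argument for sufficiency and uniqueness, and your handling of the box constraints, add rigor the paper omits, since its proof stops at the first-order conditions.
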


Since $x_i \le 1$ and $\sum_{i=1}^n x_i=1$, the additional $1/\alpha$ term in the exponent pushes the $x_i$ closer together rather than placing all the weight on the lowest-cost agent. Another interpretation is that the function penalizes concentrated allocations by scaling their weight super-linearly in this objective. Thus, we can see how $\alpha$ can be tuned to trade off efficiency and a decentralized provider set since as $\alpha \rightarrow \infty$, we get that the objective converges to the standard social cost benchmark that is minimized by a winner-take-all allocation. 

We now bound the efficiency we trade off by optimizing for a more decentralized set of providers compared to allocating all the work to the lowest cost agent. 
\begin{figure}
    \centering
    \includegraphics[width=\linewidth]{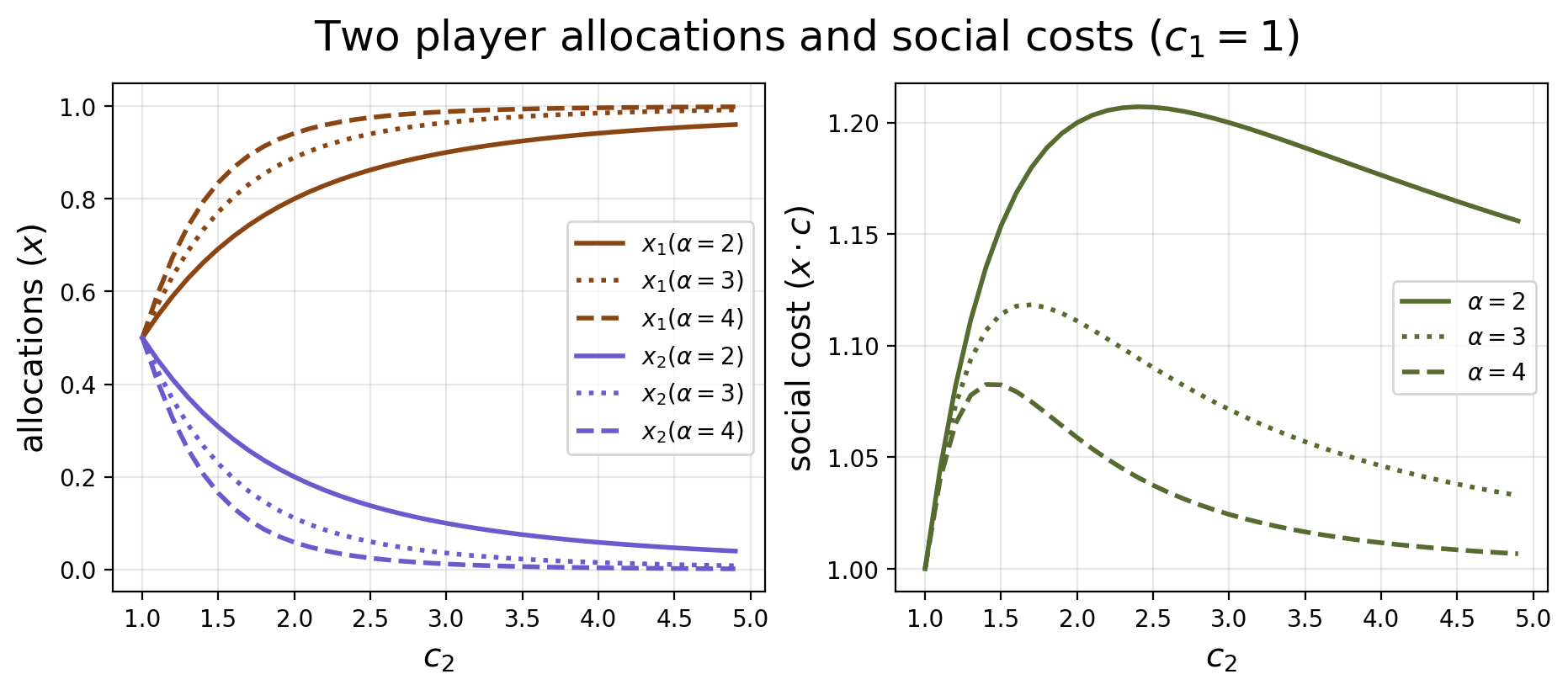}
    \caption{Allocations (left) and social cost (right) with $n=2$ and $c_1=1$ under the DSIC mechanism for various values of $\alpha$.}
    \label{fig:two-player-dsic}
\end{figure}\Cref{fig:two-player-dsic} shows the allocations ($x_1, x_2$) for the DSIC mechanism in the two-player game. We fix $c_1=1$ and vary $c_2$. We also plot the social cost, which for the two-player game is simply $c_1x_1 + c_2x_2.$ Notice that the minimal social cost in this setting would arise from fully allocating the work to player one, leading to a total incurred cost of $1$. Also, notice how the value of $\alpha$ shapes the rate of concentration of the allocation to the low-cost player. Since the mechanism is truthful, the low-cost player will continue bidding $c_1=1$, while the high-cost player will bid $c_2$. As $c_2$ increases, $c_1^{-\alpha}$ dominates the sum, and the full allocation goes to player one. Lastly, the social cost is not monotone increasing in $c_2$. 

For the $n$-player game, the social cost increases in $n$ because a larger amount of the allocation is given to higher-cost players. The rate at which the social cost degrades is highly dependent on $\alpha$. For large values of $\alpha$, the allocation heavily concentrates around the lowest-cost agent regardless of how many other agents there are. \Cref{thm:dsic-worst-case} bounds this scaling precisely. We show that the instance with the worst-case social cost takes the form of $[1,r,r,\ldots, r]$ for some $r>1$ dependent on $n$ and $\alpha$. Intuitively, this setting implies that there is one ``efficient'' player and $n-1$ ``inefficient'' players. From this characterization, we find the worst-case $r$ to give a tight bound on the worst-case social cost. 
\begin{theorem}\label{thm:dsic-worst-case}
     The DSIC mechanism (the $\alpha$-PAR allocation rule and the Myersonian payment rule) with $n$ players has a worst-case social cost approximation of $1+\left(\frac{n}{\alpha}\right)^{\frac{1}{\alpha}}$ for $\alpha \ge 1$.
\end{theorem}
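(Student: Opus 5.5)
Under truthful bidding (the mechanism is DSIC) the allocation is $x_i = c_i^{-\alpha}/\sum_j c_j^{-\alpha}$. The ratio $SC/\min_i c_i$ is scale-invariant, so I normalize $c_1 = 1 = \min_i c_i$. The goal is then to bound
\[
F(\mathbf{c}) = \frac{\sum_i c_i^{1-\alpha}}{\sum_i c_i^{-\alpha}}
\]
over $c_i \ge 1$ with $c_1 = 1$. The plan has three steps:
\begin{enumerate}
\item reduce to instances of the form $[1,r,\ldots,r]$;
\item optimize over the single parameter $r$;
\item bound the resulting value by $1 + (n/\alpha)^{1/\alpha}$.
\end{enumerate}
I read the statement as an upper bound that is tight up to lower-order terms, with the worst case attained in the family $[1,r,\ldots,r]$.

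For the reduction, substitute $y_i = c_i^{-\alpha} \in (0,1]$, so that $c_i^{1-\alpha} = y_i^{1-1/\alpha}$. Then
\[
F = \frac{1 + \sum_{i\ge 2} g(y_i)}{1 + \sum_{i\ge 2} y_i}, \qquad g(y) = y^{1-1/\alpha}.
\]
Since $\alpha \ge 1$, $g$ is concave. Fix $S = \sum_{i\ge2} y_i$. The denominator is then fixed, and by Jensen the numerator is maximized when all $y_i$ are equal, i.e.\ $c_2 = \cdots = c_n = r$. This is exactly where the hypothesis $\alpha \ge 1$ enters.

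For the one-parameter problem, set $m = n-1$ and $y = r^{-\alpha}$, so
\[
F = \frac{1 + m\,y^{1-1/\alpha}}{1 + m y}.
\]
At the maximizer the objective equals the ratio of derivatives, giving $F^* = (1-1/\alpha)\, y^{-1/\alpha} = (1-1/\alpha)\, r$. The stationarity condition simplifies to
\[
r^{\alpha} + m = m\alpha\, r - m r^{\alpha}\cdot\tfrac{1}{1}\ldots,
\]
which after clearing terms becomes $r^{\alpha} = m(\alpha - 1)\, r - m$, i.e.\ $r^{\alpha} + m = m(\alpha-1) r$; I would double-check this algebra. In any case the condition implies $r^{\alpha} \le m\alpha r$, so $r \le (m\alpha)^{1/(\alpha-1)}$.

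That bound looks mismatched with the target $(n/\alpha)^{1/\alpha}$. So rather than trusting exponents I would bound $F$ directly. Split according to whether $r \le 1 + (n/\alpha)^{1/\alpha}$. When $r$ is small, $F \le r$ trivially, since $F$ is an average of the $c_i$. When $r$ is large, write $F \le 1 + m\,(r-1)\,r^{-\alpha}$ and maximize $h(r) = m(r-1) r^{-\alpha}$; this maximum occurs at $r = \alpha/(\alpha-1)$. Working through this carefully, possibly also using $m r^{-\alpha} \le$ something, should produce the $(n/\alpha)^{1/\alpha}$ form.

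The main obstacle is this last step: extracting the clean closed form $1 + (n/\alpha)^{1/\alpha}$ from the implicit stationarity equation. I would first carry out the analysis precisely via the implicit equation, then use the inequalities $r^{\alpha} \le n r/\alpha$ (or similar) to upper bound $F^*$. For tightness, I would evaluate $F$ at $r = (n/\alpha)^{1/\alpha}$ and check that it asymptotically matches the bound.
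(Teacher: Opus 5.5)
\textbf{Verdict: there is a genuine gap.} The reduction is fine, but the quantitative bound, which is the actual content of the theorem, is never established.

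\textbf{The reduction is correct, and differs from the paper's.} The paper sets $\partial SC/\partial c_k=0$ at a presumed interior maximizer and deduces $c_2=\cdots=c_n$. Your substitution $y_i=c_i^{-\alpha}$, with Jensen applied to the concave $g(y)=y^{1-1/\alpha}$ at fixed $\sum_{i\ge 2}y_i$, gives the reduction to $[1,r,\ldots,r]$ directly, with no existence or boundary issues. The identity $F^*=(1-1/\alpha)r^*$ at a stationary point is also right.

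\textbf{Your stationarity equation is wrong.} With $m=n-1$ and $F=(r^\alpha+mr)/(r^\alpha+m)$, equating $F$ with $(1-1/\alpha)r$ gives
\[
(\alpha-1)r^{\alpha}=\alpha r^{\alpha-1}+m .
\]
This is the paper's condition $B(r)=\alpha+(1-\alpha)r+(n-1)r^{1-\alpha}=0$ multiplied by $r^{\alpha-1}$. It yields $r^*\approx (m/(\alpha-1))^{1/\alpha}$, so the mismatch you noticed was an algebra slip, not a reason to abandon the route.

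\textbf{Your fallback cannot work.} The bound $F\le 1+m(r-1)r^{-\alpha}$ discards the denominator $1+mr^{-\alpha}$. Near $r=1+(n/\alpha)^{1/\alpha}$ that denominator is about $1+\alpha$ for large $n$. Even with the best split point $R$ (balancing $R=1+h(R)$, i.e.\ $R^\alpha=m$), the best bound the two-branch argument can certify is $(n-1)^{1/\alpha}$. This exceeds $1+(n/\alpha)^{1/\alpha}$ for large $n$ whenever $\alpha>1$. For example, $\alpha=2$, $n=101$ gives $10$ against the target $\approx 8.11$.

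\textbf{How to complete your own route.} Keep the exact equation.
\begin{itemize}
\item $f'(r)$ has the sign of $B(r)$, and for $\alpha>1$, $B$ is strictly decreasing with $B(1)=n>0$. So the unique root $r^*$ is the global maximizer on $[1,\infty)$, and the maximum value is $(1-1/\alpha)r^*$.
\item Put $t=(n/\alpha)^{1/\alpha}$ and suppose $s:=(1-1/\alpha)r^*>1+t$.
\item Then $r^*>t$ and $(\alpha-1)r^*-\alpha=\alpha(s-1)>\alpha t$.
\item Hence $m=(r^*)^{\alpha-1}\bigl((\alpha-1)r^*-\alpha\bigr)>\alpha t^{\alpha}=n$, which contradicts $m=n-1$.
\end{itemize}

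\textbf{Two alternative finishes.}
\begin{itemize}
\item The paper writes the maximum as $1+(n-1)(r^*)^{1-\alpha}/\alpha$. It then uses $B\bigl((n/(\alpha-1))^{1/\alpha}\bigr)>0$ to lower-bound $r^*$.
\item Keeping the denominator also works: $F-1=m(r-1)/(r^\alpha+m)<mr/(r^\alpha+m)$. The maximum of the right-hand side is $(1-1/\alpha)\bigl(m/(\alpha-1)\bigr)^{1/\alpha}$, which is at most $(n/\alpha)^{1/\alpha}$.
\end{itemize}

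\textbf{Drop the planned tightness check.} The true maximum is asymptotically $(1-1/\alpha)^{1-1/\alpha}(n/\alpha)^{1/\alpha}$, a constant factor below the stated bound. At $r=(n/\alpha)^{1/\alpha}$ one only gets about $(1+\alpha r)/(1+\alpha)$. The paper proves only the upper bound.
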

\begin{proof}
    Note that both the allocation rule and social cost-objective are homogeneous functions of degree zero in $\mathbf{c}$. Thus WLOG assume for all agents $i$, $c_i \le c_{i+1}$ with $c_1=1$. In this case, the optimal social cost is allocating the entire job to agent 1 for a cost of 1. Thus, to show the approximation, it suffices to bound the social cost of any instance with $c_1=1$. Since the mechanism is DSIC, we can take each $x_i = \frac{c_i^{-\alpha}}{\sum_{j=1}^n c_j^{-\alpha}}.$ 
    This implies the social cost under a cost vector of $\mathbf{c}$ is 
    \begin{align*}
        SC(\mathbf{c}, \mathbf{x}) = \sum_{i=1}^n c_ix_i = \frac{\sum_{i=1}^n c_i^{1-\alpha}}{\sum_{i=1}^n c_i^{-\alpha}}.    
    \end{align*}
    Notationally, we now refer to the social cost only in terms of the cost vector, $SC(\mathbf{c})$. Given this, we can show that for $\alpha>1$, the cost vector maximizing the social cost must be of the form $[1,r,...,r]$. The argument follows from showing that the first order conditions $SC(\mathbf{c})$ are only satisfied when $c_i=c_j$ for all $i,j>1$. Using this, we let  $f(r)=\frac{1+(n-1)r^{1-\alpha}}{1+(n-1)r^{-\alpha}}$ be the social cost given a cost vector $[1, r, r, \ldots, r]$ and the theorem follows mechanically via maximizing $f(r)$ over all $r\ge 1$ (See \Cref{app:dsic-proof} for details).\hfill \qed
\end{proof}

This bound shows that the social cost is not strictly increasing in each agent's costs. There is a certain threshold beyond which an agent's increasing cost decreases their allocation at a faster rate than their increased cost harms efficiency. 
 

While this parameterization lets a mechanism designer precisely trade off between decentralization and efficiency, it has two key issues: (i) it is not Sybil-proof, and (ii) it is hard for bidders to reason about and difficult to implement on-chain. To see that the DSIC mechanism is not Sybil-proof, notice that a bidder can simply make infinite bids with a constant cost of $C$. Each additional bid increases their allocation at the same cost.

Regarding the complexity of the payment rule, while the mechanism is DSIC, there is a large empirical literature showing that bidders don't play dominant strategies in DSIC mechanisms even when the mechanism is relatively simple, such as a second-price auction \cite{kagel1993independent,harstad2000dominant,garratt2012behavior}. Furthermore, the payment rules required to implement the allocation rules as described would be quite complex and require numerical approximations, leading to complex code for the auction mechanism. 
One would expect bidders to fail to understand that these mechanisms are truthful and, hence, bid sub-optimally. In \Cref{sec:tullock}, we first give a mechanism that is Sybil-proof but still hard for bidders to reason about; in \Cref{sec:PAB}, we give a mechanism that is also not Sybil-proof but easier for bidders to participate in. 

\section{Tullock Procurement Contests} \label{sec:tullock}
To construct a Sybil-proof mechanism that avoids winner-take-all equilibria, we take inspiration from the literature on Tullock contests. Tullock contests refer to a class of games where agents compete for a good via making costly investments \cite{tullock2008efficient}. The good is typically allocated randomly proportional to agents' investments, but all agents have to pay regardless of whether they win or not. Usually, in Tullock contests, the loss in efficiency from a suboptimal agent winning the good is a downside. However, in our case, we will use this fact to implement a non-winner-take-all equilibrium. We turn our setting into a Tullock contest by having the procurer put up a budget $B$ as a reward for bidders to compete over, which is presented as a ``Proof Contest'' in \cite{succinct}. Agents then submit bids, and both the work and budget are allocated proportionally to agents' bids. However, agents get their bid deducted from their reward regardless of how much reward they are allocated. Under this mechanism, we allow bidders to have negative utilities where they potentially have to pay the auctioneer. 

\begin{definition}[Tullock procurement contest (see also \cite{succinct})]
    A Tullock procurement contest is given by the following allocation  and payment rule parameterized by a budget $B$:
    \begin{align*}
        x_i(\bids) = \frac{b_i}{\sum_{j=1}^n b_j}, \ \ p_i(\bids) = Bx_i(\bids)-b_i.
    \end{align*}
\end{definition}
This implies that agents have utilities given by 
\begin{align*}
    u_i(\bids) = \frac{b_i}{\sum_{j=1}^n b_j} (B - c_i) - b_i 
\end{align*}
From this, we can see that Tullock contests are Sybil-proof, as a bidder can always submit a single bid to get the same payoff they would have from multiple bids.  

\begin{lemma}
    Tullock procurement contests are Sybil-proof.
\end{lemma}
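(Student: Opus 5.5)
The plan is to construct, for any multi-bid strategy, a single bid that reproduces the agent's utility exactly against every profile of opposing bids. Suppose agent $i$ submits bids $b_{i,1},\dots,b_{i,m}$ under $m$ identities, and the other agents submit a collection of bids whose sum is $S_{-i}$. Sybil identities controlled by $i$ all incur $i$'s marginal cost $c_i$, and $i$'s utility is the sum over its identities. Writing $\beta_i=\sum_{k=1}^m b_{i,k}$, the total work assigned to $i$ is
\begin{align*}
\sum_{k=1}^m \frac{b_{i,k}}{\beta_i+S_{-i}} = \frac{\beta_i}{\beta_i+S_{-i}},
\end{align*}
and the total payment is $B\beta_i/(\beta_i+S_{-i})-\beta_i$. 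Hence the aggregate utility is
\begin{align*}
\frac{\beta_i}{\beta_i+S_{-i}}(B-c_i)-\beta_i,
\end{align*}
which is exactly $u_i(\beta_i,\mathbf{b}_{-i})$ from the single-bid utility formula displayed just before the lemma.

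The alternative strategy is therefore $\sigma'_i(\mathbf{c})=\sum_k \sigma_{i,k}(\mathbf{c})$, the sum of the component bids the original strategy would submit. It depends only on $\mathbf{c}$, as the definition of Sybil-proofness requires. The utility identity then holds with equality for every $\mathbf{b}_{-i}$, so the weak inequality in the definition follows immediately. The key observation is that both the allocation and the payment are linear in the agent's own bid mass for a fixed denominator, so splitting a bid changes nothing.

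The only step that needs care is the modelling convention. I will state that each Sybil identity of agent $i$ incurs the same per-unit cost $c_i$ and that utilities add across identities. I will also note the degenerate case where all bids are zero: here I would adopt the convention that the allocation is split evenly, or simply assume the denominator is positive. In that case the merged single bid is also $0$, so the total bid mass in the denominator is unchanged and the outcome is the same. Apart from this, the argument is a one-line calculation.
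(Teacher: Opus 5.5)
Your proof is correct and is essentially the paper's argument: replace the multi-bid strategy by the single bid equal to the sum of the component bids, and observe that allocation and payment depend on your own bids only through that sum, so your utility is identical against every $\mathbf{b}_{-i}$. One small caveat on the degenerate case: an ``even split among identities'' convention when all bids are zero is \emph{not} invariant under merging, since $m$ zero bids would receive $m$ shares rather than one, so rely on the positive-denominator assumption or a convention that gives zero bidders nothing.
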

\begin{proof}
    Consider a bidding strategy $\sigma_i(c) = (b_i^1,...,b_i^k)$. Then let $\sigma'_i(c) = \sum_{j=1}^k b_i^j$. Let the sum of the other agents' bids be $\beta$. Then we have that 
    \begin{align*}
        u_i(\sigma_i(c),\mathbf{b}_{-i}) = \sum_{j=1}^k \left(\frac{ b_i^j}{\sum_{\ell=1}^k b_i^\ell+\beta}(B-c_i)- b_i^j\right) = u_i(\sigma'_i(c),b_{-i}).
    \end{align*} 
    \hfill \qed
\end{proof}

We now characterize the equilibrium of these mechanisms and how they achieve a near-optimal price of anarchy (PoA). We start by recalling the standard Tullock contest setting and well-known results characterizing their equilibrium and PoA. 

\begin{definition}[Tullock contests (from \cite{tullock2008efficient})]
    A Tullock contest is defined by a set of $n$ agents, each having value $v_i$ for some good. Each agent simultaneously submits a bid $b_i$, and the good is randomly allocated proportional to bids, with every agent paying their bid regardless of whether they won. 
\end{definition}

Notice that a Tullock procurement contest with a budget of $B$ is a standard Tullock contest where each agent $i$ has value $\max\{B-c_i,0\}$ for winning the item. Thus, we can use the following well-known results characterizing the PNE of Tullock contests to help analyze the equilibrium for Tullock procurement contests.

\begin{figure}
    \centering
    \includegraphics[width=0.9\linewidth]{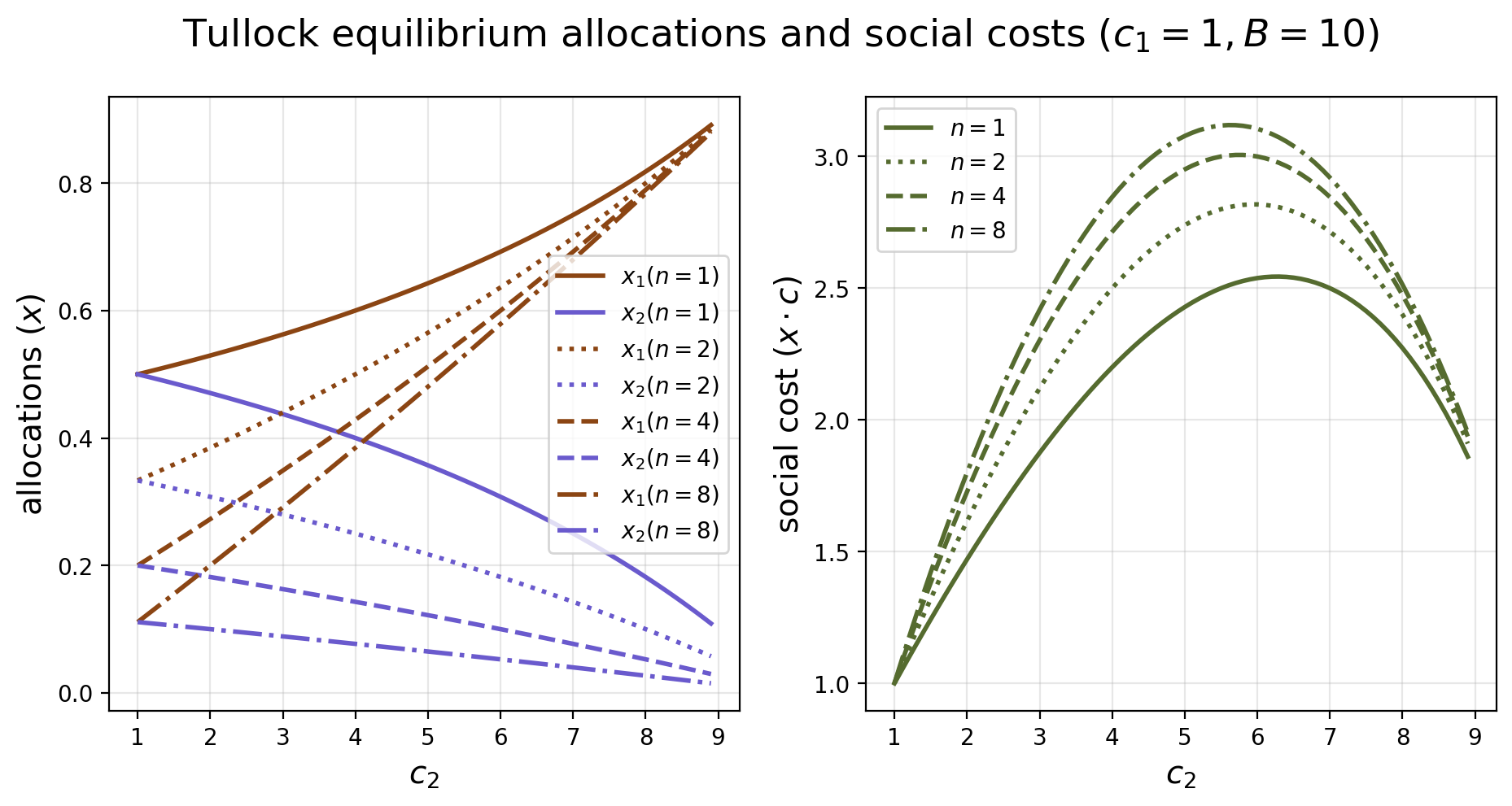}
    \caption{Equilibrium allocations and the resulting social costs in the Tullock procurement mechanism as a function of $c_2$ with $c_1=1$ and various values of $n$.
    Note that we only show $x_2$ in the left subplot because $x_2=x_3=\ldots=x_n$.}
    \label{fig:n-player-tullock}
\end{figure}

\begin{lemma}\label{l:tullock_pne_exist}(from \cite{Tullock-Ch7})
     Tullock contests where at least two agents have strictly positive value for the good have unique PNEs.
\end{lemma}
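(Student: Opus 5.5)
Since $u_i$ is concave in $b_i$, equilibria are exactly solutions of the first-order (KKT) conditions; I will solve those conditions directly, which gives existence and uniqueness together. Relabel so that $v_1\ge v_2\ge\cdots\ge v_n\ge 0$ with $v_2>0$, and write $u_i(\bids)=v_i\,b_i/S-b_i$ with $S=\sum_j b_j$.

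\textbf{Step 1: reduction to first-order conditions.} If $S_{-i}=\sum_{j\ne i}b_j>0$, then $u_i$ is strictly concave in $b_i$ on $[0,\infty)$, because
\begin{align*}
\frac{\partial^2 u_i}{\partial b_i^2}=-\frac{2v_iS_{-i}}{S^3}<0 .
\end{align*}
So agent $i$'s best response is characterized by the KKT condition: either $v_iS_{-i}/S^2=1$ with $b_i>0$, or $v_i/S_{-i}\le 1$ with $b_i=0$.

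Next I rule out degenerate profiles. The all-zero profile is not an equilibrium, since any agent with $v_i>0$ gains from a tiny positive bid. A profile with exactly one positive bidder is not an equilibrium either: if that bidder has $v_i>0$ he can lower his bid and keep the whole item; if $v_i=0$ he is losing his bid, so he wants to bid $0$. Hence every equilibrium has $S_{-i}>0$ for every $i$, and the concavity argument applies to all agents.

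\textbf{Step 2: solving the system.} Fix the total $S>0$. The condition $v_i(S-b_i)=S^2$ gives $b_i=S(1-S/v_i)$ whenever $v_i>S$, and $b_i=0$ otherwise. Equivalently, agent $i$'s share is
\begin{align*}
x_i=\max\{0,\,1-S/v_i\}.
\end{align*}
Consistency requires $\sum_i x_i=1$. Define
\begin{align*}
g(S)=\sum_i \max\{0,\,1-S/v_i\}.
\end{align*}
On $(0,v_1)$ this function is continuous and strictly decreasing, and $g(S)\to 0$ as $S\to v_1$. As $S\to 0$, we have $g(S)\to m$, the number of agents with $v_i>0$, and $m\ge 2$.

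By the intermediate value theorem there is a unique $S^*$ with $g(S^*)=1$. It determines a unique profile $b_i=S^*x_i$. The active set is then automatically $\{i: v_i>S^*\}$, and inactive agents satisfy $v_i\le S^*=S_{-i}$, so their KKT condition holds.

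Conversely, any equilibrium must satisfy these conditions with total $S$, which forces $g(S)=1$ and hence $S=S^*$. This gives uniqueness.

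\textbf{Main obstacle.} The delicate part is Step 1: justifying that no equilibrium has $S_{-i}=0$ for some agent. There the utility is discontinuous at $b_i=0$, and the best-response set is empty when the bidder values the item positively. I would handle this by the explicit deviations described above, together with care over the tie rule at $\bids=0$; any convention, such as allocating nothing or allocating uniformly, still admits a profitable deviation for an agent with $v_i>0$. Given the hypothesis of at least two positive values, the remaining steps are routine monotonicity arguments.
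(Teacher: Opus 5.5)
Your argument is correct. It does not follow the paper's route, because the paper gives no proof of this lemma: it imports the lemma from \cite{Tullock-Ch7}, and it separately imports the equilibrium characterization of \Cref{l:tullock_pne_char} from \cite{arnosti2022bitcoin}.

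Your proof is self-contained and covers both imported results at once. The equilibrium total $S^*$ defined by $g(S^*)=1$ is exactly the $v^*$ with $F(v^*)=1$ in \Cref{l:tullock_pne_char}. Your shares $x_i=\max\{0,1-S^*/v_i\}$ are exactly the characterized allocation. So you also obtain what the later $B\to\infty$ lemma relies on, without a second citation. The citation route is shorter, and it leaves the boundary conventions (the tie rule at $\bids=0$, nonexistence when only one value is positive) to the source.

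A few small repairs are needed:
\begin{itemize}
\item An agent with $v_i=0$ has $u_i=-b_i$, which is linear rather than strictly concave. State the reduction for concave $u_i$; the KKT conditions are still necessary and sufficient and force $b_i=0$.
\item Read $\max\{0,1-S/v_i\}$ as $0$ when $v_i=0$.
\item In the sufficiency direction, record that $S_{-i}>0$ at the constructed profile. This holds because $S_{-i}=S^{*2}/v_i$ for active agents and $S_{-i}=S^*$ for inactive ones.
\item For uniqueness over all $S>0$, not just on $(0,v_1)$, note that $g\equiv 0$ for $S\ge v_1$.
\end{itemize}
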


\begin{lemma}\label{l:tullock_PoA}(from \cite{johari2004efficiency})
    At the equilibrium allocation $\tilde{x}$ induced by the PNE of a Tullock contest, $\sum_{i=1}^n v_i\tilde{x}_i \ge \frac{3}{4} \max\{v_i\}$. 
\end{lemma}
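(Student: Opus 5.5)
The plan is to reduce to the standard Nash-equilibrium analysis of the Tullock contest and use a potential-free, first-order characterization. At a PNE $\tilde{\mathbf b}$ with total $S=\sum_j \tilde b_j$, each player $i$ maximizes the concave function $b_i \mapsto \frac{b_i}{b_i+S_{-i}}v_i - b_i$ on $b_i\ge 0$. Here $S_{-i}=S-\tilde b_i$. The first-order condition gives $\frac{S_{-i}}{S^2}v_i = 1$ whenever $\tilde b_i>0$, and $\frac{S_{-i}}{S^2}v_i\le 1$ whenever $\tilde b_i=0$. Rewriting $S_{-i}=S(1-\tilde x_i)$, this becomes
\begin{align*}
\tilde x_i = \max\left\{0,\,1-\frac{S}{v_i}\right\}.
\end{align*}
So the equilibrium allocation is determined by the single scalar $S$, which is pinned down by $\sum_i \tilde x_i=1$. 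I will take existence and uniqueness from \Cref{l:tullock_pne_exist} and only use this characterization.

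Order the values $v_1\ge v_2\ge\cdots\ge v_n$ and write $v_1=\max_i v_i$. Since $\tilde x_i = (1-S/v_i)^+$, the welfare is
\begin{align*}
W=\sum_i v_i\tilde x_i=\sum_i (v_i-S)^+ .
\end{align*}
This lets me reduce to the worst case in two steps. First, only players with $v_i>S$ are active, and the constraint reads $\sum_{i\,\text{active}}(1-S/v_i)=1$. Second, I normalize $v_1=1$ and set $s=S\in(0,1)$. Then $W=(1-s)+\sum_{i\ge 2}(v_i-s)^+$, and the constraint gives $\sum_{i\ge2}(1-s/v_i)^+ = s$. For each other active player, $v_i - s = v_i(1-s/v_i)\ge s(1-s/v_i)$ because $v_i>s$. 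Summing over $i\ge 2$ yields $\sum_{i\ge2}(v_i-s)^+\ge s\cdot s=s^2$. Hence
\begin{align*}
W\ge 1-s+s^2\ge \tfrac34,
\end{align*}
since $1-s+s^2$ is minimized at $s=1/2$.

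The main obstacle is making the first-order characterization airtight. I need concavity in one's own bid, which holds because $b/(b+S_{-i})$ is concave. I also need to rule out $S_{-i}=0$ at equilibrium, which uses the assumption that at least two players have positive value, so no one can win with an arbitrarily small bid. Finally, I need the boundary case $\tilde b_i=0$, which is handled by the inequality form of the condition. Once that is settled, the bound is the short calculation above. It is tight when a second player has $v_2\to s=1/2$ with the appropriate multiplicity.
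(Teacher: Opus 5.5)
Your proof is correct. It also supplies something the paper does not: the paper gives no argument for this lemma and simply imports it from Johari--Tsitsiklis. Their theorem covers arbitrary concave utilities. It is proved by first replacing each utility with its linearization at the equilibrium, which keeps the profile an equilibrium and can only worsen the ratio, and then treating linear values. Since Tullock values are already linear, you go straight to the core. Your first-order conditions give $\tilde x_i=(1-S/v_i)^+$, which is exactly \Cref{l:tullock_pne_char} with $v^*=S$, so within this paper you could cite that lemma instead of rederiving it. The single estimate $v_i-s\ge s(1-s/v_i)$ for active $i\ge 2$, together with $\sum_{i\ge 2}\tilde x_i=s$, then yields $W\ge 1-s+s^2\ge 3/4$. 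The citation buys generality. Your route buys a short self-contained proof in exactly the setting the paper uses (values $\max\{B-c_i,0\}$), and it makes the extremal structure visible.

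Three small points to tighten. First, spell out why $S_{-i}>0$ at a PNE. A lone positive bidder could halve her bid and still win everything. At the all-zero profile, some player with positive value is not receiving the whole good (one exists because two players have positive value), and she gains by bidding a small $\epsilon$. Second, note that $0<s<1$: some bid is positive, and some player is active, so $S<v_i\le v_1$; this is what makes player $1$ active. Third, the constant $3/4$ is only approached as $n\to\infty$. The slack in your estimate is $(v_i-s)^2/v_i>0$ for every active $i$, so equality requires $v_i\downarrow s=1/2$ and hence $\tilde x_i\to 0$, which forces unboundedly many such players. For fixed $n$ the inequality is strict; for $n=2$ the worst case is $2\sqrt{2}-2$.
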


\begin{lemma}\label{l:tullock_pne_char}(from \cite{arnosti2022bitcoin})
    Let $F(x) = \sum_{i}\max\left(1-\frac{x}{v_i},0\right)$ and let $v^*$ be the solution to $F(v^*)=1$. Then the equilibrium allocation $\tilde{x}$ of a Tullock contest is given by $\tilde{x
    }_i = \max\left(1-\frac{v^*}{v_i},0\right)$
\end{lemma}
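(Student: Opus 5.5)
The plan is to characterize equilibria through first-order conditions for Tullock contests with values $v_i \ge 0$ and allocation $x_i = b_i/\sum_j b_j$. First, for fixed $\mathbf{b}_{-i}$ with $\beta = \sum_{j\neq i} b_j > 0$, agent $i$'s utility $u_i = \frac{b_i}{b_i+\beta}v_i - b_i$ is strictly concave in $b_i \ge 0$. Its derivative at $b_i$ is $\frac{\beta v_i}{(b_i+\beta)^2} - 1$. So the best response is unique: $b_i = \max(\sqrt{\beta v_i}-\beta, 0)$. Writing $S=\sum_j b_j$ for the total bid, an active agent ($b_i>0$) satisfies $\beta v_i = S^2$, i.e.\ $S - b_i = S^2/v_i$. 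Hence $x_i = 1 - S/v_i$, and this must be positive. An inactive agent satisfies $v_i \le \beta = S$, so $\max(1-S/v_i,0)=0=x_i$.

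Next I would rule out degenerate profiles. Since at least two agents have $v_i>0$, $S=0$ cannot be an equilibrium: any such agent could bid a tiny $\epsilon$ and win the whole good at negligible cost. Likewise only one agent bidding positive cannot be an equilibrium, because that agent would profit by lowering its bid while keeping allocation $1$. So $S>0$ and $\beta>0$ for every agent. Hence every PNE satisfies $x_i=\max(1-S/v_i,0)$ for all $i$. Summing, $F(S)=\sum_i\max(1-S/v_i,0)=1$.

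Then I would solve this fixed point. $F$ is continuous and nonincreasing on $[0,\infty)$, strictly decreasing wherever it is positive. It satisfies $F(0)\ge 2$ (at least two positive values) and $F(x)=0$ for $x\ge\max_i v_i$. So $F(v^*)=1$ has exactly one solution $v^*$, and any equilibrium has $S=v^*$. This gives uniqueness of the equilibrium allocation, and in fact of bids, since $b_i=x_iS$.

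Finally, for existence, I would define $b_i = v^*\max(1-v^*/v_i,0)$. These sum to $v^*$, and each $b_i$ satisfies the best-response formula above given $\beta=v^*-b_i$. By concavity, best responses are global optima, so this profile is a PNE with $\tilde{x}_i=\max(1-v^*/v_i,0)$.

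The main obstacle is the boundary analysis: excluding the all-zero and single-bidder profiles, where the utility is discontinuous at $\beta=0$, and checking the inactive-agent condition $v_i\le S$. Once $\beta>0$ is secured for every agent, the rest is the routine first-order computation.
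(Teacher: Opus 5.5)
Your argument is correct and complete. The paper does not prove this lemma; it imports it from \cite{arnosti2022bitcoin}. Your route is the standard best-response derivation behind that cited result:
\begin{itemize}
\item the interior condition $(b_i+\beta)^2=\beta v_i$ gives $x_i=1-S/v_i$ for active agents;
\item inactive agents satisfy $v_i\le S$;
\item the all-zero and single-bidder profiles are excluded;
\item the equilibrium total bid is then pinned down by the fixed point $F(S)=1$.
\end{itemize}

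Because your argument is self-contained, it establishes existence and uniqueness of the PNE at the same time. In other words, it also reproves \Cref{l:tullock_pne_exist}, which the paper takes from a separate source.

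You also state explicitly the hypothesis that at least two agents have $v_i>0$. The statement as written omits it, but it is what guarantees $v^*>0$ and $\beta>0$ at equilibrium.

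One minor point. Agents with $v_i=0$ do occur in the procurement application, where values are $\max\{B-c_i,0\}$. For such an agent the utility is linear rather than strictly concave, although the best response is still uniquely $0$. Its term in $F$ should be read as $0$.
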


\Cref{fig:n-player-tullock} shows the equilibrium allocations and resulting social costs for the Tullock contest as a function of $c_2$ for the cost vector $\mathbf{c} = [1, c_2, \ldots, c_2].$ Various values of $n$ demonstrate the different rates at which the allocation concentrates on player 1. The right subplot shows that as $n$ increases, the maximal social cost increases. Intuitively, this arises because larger portions of the allocation are given to the high-cost players.

From \Cref{l:tullock_pne_exist}, we have that as long as there are two agents such that $B>c_i$, the Tullock procurement contest has a PNE. Then, applying \Cref{l:tullock_PoA}, we can bound the PoA of Tullock procurement contests as a function of $B$ and $c_{min}$ where $c_{min}$ is the lowest cost of any agent. 

\begin{lemma}
    Let $\gamma = B/c_{min}$. Then the PoA of the Tullock procurement contest is $\frac{\gamma+3}{4}$
\end{lemma}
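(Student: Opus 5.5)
We reduce to \Cref{l:tullock_PoA} by viewing the Tullock procurement contest as a standard Tullock contest with values $v_i=\max\{B-c_i,0\}$. We then translate the value guarantee into a social-cost guarantee. Normalize so that $c_{min}=c_1$, giving $\max_i v_i = B-c_{min}$, and assume at least two agents have $c_i<B$ so that a unique PNE exists by \Cref{l:tullock_pne_exist}. By \Cref{l:tullock_pne_char}, every agent with $v_i=0$ receives zero allocation at equilibrium. Hence for every agent with $\tilde x_i>0$ we have $v_i = B-c_i$, and so
\begin{align*}
SC(\mathbf{c},\tilde{\mathbf{x}}) = \sum_i c_i\tilde x_i = \sum_i (B - v_i)\tilde x_i = B - \sum_i v_i\tilde x_i ,
\end{align*}
using $\sum_i \tilde x_i = 1$.

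\emph{Upper bound.} Apply \Cref{l:tullock_PoA}: $\sum_i v_i\tilde x_i \ge \tfrac34 (B-c_{min})$. Substituting gives $SC \le B - \tfrac34(B-c_{min}) = \tfrac{B+3c_{min}}{4}$. The optimal social cost is $c_{min}$, so dividing yields $PoA \le \tfrac{\gamma+3}{4}$.

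\emph{Tightness.} We want an instance where the $3/4$ bound of \Cref{l:tullock_PoA} is (asymptotically) attained, since the statement asserts equality. The natural candidate is the standard tight example for Tullock contests: one agent with value $v$ and many agents with small values $\epsilon$. In procurement terms, take $c_1=c_{min}$ and $n-1$ agents with costs $c_i$ just below $B$. Using \Cref{l:tullock_pne_char}, compute $v^*$ from $F(v^*)=1$. As the number of small agents grows with $(n-1)\epsilon$ held fixed at a suitable constant, the allocation approaches $\tilde x_1 = 1-v^*/v$ with $v^*$ close to $v/2$ (or the appropriate limit). This drives $\sum_i v_i\tilde x_i \to \tfrac34 v$, with the remaining mass $\approx 1/4$ going to agents of cost $\approx B$. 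The social cost is then approximately $\tfrac34 c_{min} + \tfrac14 B$, giving ratio $\to(\gamma+3)/4$.

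The main obstacle is this tightness computation. One must choose the parameters (number of high-cost agents and their common value) so that $\tilde x_1\to 3/4$, i.e. $v^*\to v/4$, while the other agents' values tend to $0$. I would first try $k$ agents each of value $\epsilon=v/(4k)\cdot c$ for a constant $c$ tuned so that $F(v^*)=1$ holds with $v^*=v/4+o(1)$, then take $k\to\infty$. If the supremum is only approached in the limit, the statement should be read as a supremum over instances.
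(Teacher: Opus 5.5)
Your upper-bound argument is correct and is essentially the paper's entire proof. You reduce to a standard Tullock contest with values $B-c_i$, invoke the $\tfrac34$ welfare guarantee, and rearrange to get $SC\le (B+3c_{min})/4$. Your remark that agents with $c_i\ge B$ receive zero allocation, so that $SC=B-\sum_i v_i\tilde x_i$ holds exactly, makes explicit a step the paper uses silently. The paper proves only this direction, so the lemma is effectively an upper bound on the PoA.

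The tightness sketch, which the paper does not attempt, would fail as you planned it. In any equilibrium at least two agents are active, and $v^*<v_i$ for every active $i$. Hence $\tilde x_1=1-v^*/v_1>1-\max_{j\ne 1}v_j/v_1$. If the competitors' values tend to $0$ (costs near $B$), then $\tilde x_1\to 1$ and the equilibrium becomes efficient, so $\tilde x_1\to 3/4$ with vanishing competitor values is impossible. Your numbers are also internally inconsistent: $v^*\approx v/2$ gives $\tilde x_1\approx 1/2$, not a remaining mass of $1/4$.

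The extremal family instead uses $k\to\infty$ competitors of value $v/2$, where $v=B-c_{min}$, i.e., competitors of cost $(B+c_{min})/2$. Then $v^*=kv/(2k+1)$ and $\tilde x_1=(k+1)/(2k+1)\to 1/2$, with each competitor receiving $1/(2k+1)$. This gives $SC\to \tfrac12 c_{min}+\tfrac12\cdot\tfrac{B+c_{min}}{2}=\tfrac{B+3c_{min}}{4}$.

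The bound is only approached as $n\to\infty$ and is strict for every fixed $n$. For example, with $n=2$ the worst welfare ratio is $2\sqrt2-2>\tfrac34$. So ``equality'' can only mean a supremum over instances with ratio $\gamma$.
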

\begin{proof}
    Let $\tilde{x}$ be an allocation induced by the equilibrium of the Tullock contest. From \Cref{l:tullock_PoA}, we have 
    \begin{align*}
        \sum_{i=1}^n (B-c_i)\tilde{x}_i \ge \frac{3}{4} (B-c_{min}) \implies \sum_{i=1}^n c_i\tilde{x}_i \le \frac{B+3c_{min}}{4}
    \end{align*}
    The lemma then follows from the optimal social cost being $c_{min}$.
    \hfill \qed
\end{proof}

We see that, holding the costs constant, the PoA is minimized when $B$ is slightly above the second lowest cost out of all the agents. This corresponds to only the two bidders with the lowest costs bidding when the total reward is sufficiently small. On the contrary, as $B$ grows large, even the highest-cost bidders participate. For sufficiently large $B$, all bidders receive an approximately equal allocation at equilibrium as the ratio of their values shrinks. Thus, a protocol can increase $B$ to trade off between efficiency, total payment, and how many bidders they desire to be competitive at equilibrium.

\begin{lemma}
    In a Tullock procurement contest with equilibrium allocation $\tilde{x}$, for all agents $i$,  $\lim_{B\rightarrow\infty} \tilde{x}_i = \frac{1}{n}$
\end{lemma}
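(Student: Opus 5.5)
We will work directly from the characterization in \Cref{l:tullock_pne_char}, applied with values $v_i = \max\{B-c_i,0\}$. Once $B > \max_i c_i$, every agent has strictly positive value $v_i = B - c_i$. By \Cref{l:tullock_pne_exist} the PNE is unique, so the equilibrium allocation $\tilde{x}$ is well defined. Moreover, the truncation in $F$ plays no role when $v^* \le \min_i v_i$. We will first show that this is the case for large $B$ and then pass to the limit.

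The first step is to solve for $v^*$ in the regime where all agents are active. Suppose all terms of $F$ are positive. Then $F(v^*) = n - v^*\sum_i 1/v_i = 1$, which gives
\begin{align*}
v^* = \frac{n-1}{\sum_{j=1}^n \frac{1}{B-c_j}}.
\end{align*}
For consistency we need $v^* < v_i$ for every $i$, i.e.\ $(n-1)(B-c_i)^{-1} < \sum_j (B-c_j)^{-1}$ for each $i$. As $B\to\infty$ we have $(B-c_j)/(B-c_i)\to 1$, so the right side tends to $n$ times the left's factor while the left tends to $n-1$ times it; hence the inequality holds for all sufficiently large $B$. Since $F$ is strictly decreasing on the relevant range, the root is unique, so this is indeed $v^*$.

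The second step is to compute the limit. Substituting, we get
\begin{align*}
\tilde{x}_i = 1 - \frac{n-1}{\sum_{j} \frac{B-c_i}{B-c_j}}.
\end{align*}
Each ratio $(B-c_i)/(B-c_j)\to 1$ as $B\to\infty$, so the denominator tends to $n$. Therefore $\tilde{x}_i \to 1 - (n-1)/n = 1/n$.

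The only mildly delicate point is justifying that all agents are eventually active, i.e.\ that the $\max(\cdot,0)$ truncations can be dropped. This is exactly the consistency check in the first step. It requires $n \ge 2$ so that \Cref{l:tullock_pne_exist} applies. The case $n=1$ is trivial, since the single agent receives the whole allocation.
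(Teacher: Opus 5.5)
Your proposal is correct and follows essentially the same route as the paper: apply \Cref{l:tullock_pne_char} with $v_i = B - c_i$, show all agents are active for large $B$, solve $v^* = (n-1)/\sum_j (B-c_j)^{-1}$, and let $B \to \infty$. The only difference is cosmetic: the paper proves activity by checking $F(B, B-c_{\max}) < 1$ and using that $F$ is decreasing, whereas you solve for the candidate root first and then verify it lies in the all-active region, which amounts to the same check.
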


\begin{proof}
    We first show that for large enough $B$, all agents bid and hence have a positive allocation at equilibrium. Consider the function $F(B,x) = \sum_{i=1}^n \max\{1-\frac{x}{B-c_i},0\}$ modified from \Cref{l:tullock_pne_char} for our setting. Define $x^*(B)$ to be the solution to $F(B,x^*(B))=1$. It suffices to show $B-c_{max} > x^*(B)$ for sufficiently large $B$. Note that for all $i$, $\lim_{B\rightarrow \infty} \frac{B-c_{max}}{B-c_i} =1$, thus $\lim_{B\rightarrow \infty} \sum_{i=1}^n 1-\frac{B-c_{max}}{B-c_i} < 1 $ and $B-c_{max} > x^*(B)$ since $F(B,x)$ is monotone decreasing in $x$. From this, for sufficiently large $B$, we have that all bidders are active, allowing us to explicitly solve for $x^*(B)$: 
    \begin{align*}
        \sum_{i=1}^n 1 - \frac{x^*(B)}{B-c_i}=1 \implies x^*(B)= \frac{n-1}{\sum_{i=1}^n \frac{1}{B-c_i}}.
    \end{align*}
    It follows that $\lim_{B\rightarrow \infty} \frac{x^*(B)}{B-c_i} = \frac{n-1}{n}$. Then again using \Cref{l:tullock_pne_char} again, we have that 
    \begin{align*}
        \lim_{B\rightarrow \infty} \tilde{x}_i = \lim_{B\rightarrow \infty} 1 - \frac{x^*(B)}{B-c_i} = \frac{1}{n}. 
    \end{align*}
    \hfill \qed
\end{proof}

Note that while Tullock contests are Sybil-proof and implement near-optimal equilibria alongside decentralized allocations, bidders may find them problematic. In particular, Tullock contests are not ex-post safe. Regardless of a bidder's cost and the budget, for any non-zero bid $b_i$, there exists a bid vector $\bids_{-i}$ such that $(B-c_i)x_i(\bids) = (B-c_i)\frac{b_i}{b_i +\sum_{j=1}^n b_j} < b_i$ hence giving them a negative utility. This means that bidders participating in such a mechanism must be sophisticated enough to reason about others' costs and bids to avoid losing money by participating. 

In the following section, we will give a mechanism that is ex-post safe and has a simple payment rule, alleviating the difficulty of participation. However, this mechanism is no longer Sybil-proof, making it unsuitable for many decentralized applications. 

\section{Paid-as-Bid Mechanism}\label{sec:PAB}
In this section, we consider a hybrid mechanism between the DSIC and Tullock contest mechanisms of the previous two sections. In particular, we use the same $\alpha$-PAR
allocation rule (\Cref{def:allocation_rule}) but a simpler payment rule inspired by the all-pay-as-bid Tullock rule. Since $\alpha$-PARs are reverse auction allocation rules, we consider the payment rule where the bidder is \textit{paid} proportionally to their bid. These mechanisms are motivated by simplicity of implementation, where it is clear to users how their payment was derived from their bid and allocation. This is in contrast to the DSIC mechanism, where agents find it hard to reason about their payments even though truthful bidding is a dominant strategy.

\begin{definition}[Paid-as-bid payment rule]\label{def:payment-proportional}
	The \textit{paid-as-bid} rule pays each producer $i$, $p_i = b_i x_i.$
\end{definition}

Plugging the payment rule into the player utility function (\Cref{def:playerutility}), we can rewrite $i$'s utility as,
\begin{align*}
    u_i(b_i, \mathbf{b}_{-i}) := (b_i - c_i) x_i(\mathbf{b}).
\end{align*}

As the mechanism is paid-as-bid (rather than the Myersonian payment rule), it will not be DSIC. Thus, we consider the PNE as the solutions of interest for various cost vectors, c, in the complete-information setting. We start by showing that PNE always exists for this mechanism regardless of the private costs.

\begin{theorem}[Unique PNE]\label{thm:unique_pne}
    The mechanism defined by an $\alpha-$PAR with $\alpha > n/(n-1)$ and the proportional-to-bid payment rule has a unique PNE characterized by the bid vector $\mathbf{\tilde{b}}$ that solves the implicit equation
    \begin{align*}
        \tilde{b}_i = \frac{\alpha c_i(1-x_i)}{\alpha(1-x_i)-1}.
    \end{align*}
\end{theorem}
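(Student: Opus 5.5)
We would prove the theorem in three stages: each player's best response is unique and given by the first-order condition; the resulting fixed-point system has a solution; and that solution is unique.

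\textbf{Best responses.} Fix $\mathbf{b}_{-i}$ with $D_i=\sum_{j\neq i} b_j^{-\alpha}>0$. Player $i$'s utility is $u_i=(b_i-c_i)x_i$ with $x_i=b_i^{-\alpha}/(b_i^{-\alpha}+D_i)$. Bids $b_i\le c_i$ give nonpositive utility, while any $b_i>c_i$ gives positive utility. Moreover $u_i\to 0$ as $b_i\to\infty$, because $b_i x_i\approx b_i^{1-\alpha}/D_i\to 0$ for $\alpha>1$. Hence a maximizer exists in $(c_i,\infty)$.

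Using $\partial x_i/\partial b_i=-\alpha x_i(1-x_i)/b_i$, the first-order condition is $x_i-(b_i-c_i)\alpha x_i(1-x_i)/b_i=0$. This is equivalent to $b_i=\alpha(b_i-c_i)(1-x_i)$, which rearranges to the stated formula $b_i=\alpha c_i(1-x_i)/(\alpha(1-x_i)-1)$. It is valid only when $\alpha(1-x_i)>1$.

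To get uniqueness of the critical point, set $g(b_i)=\alpha(b_i-c_i)(1-x_i(b_i))-b_i$, whose sign equals the sign of $-u_i'$ up to a positive factor. Then $g(c_i)<0$, and $g$ is positive for large $b_i$ since $1-x_i\to1$ and $\alpha>1$. Rewriting the condition as $\frac{b_i-c_i}{b_i}\,\alpha(1-x_i)=1$, the left side is a product of two positive, strictly increasing functions of $b_i$ on $(c_i,\infty)$, because $x_i$ decreases in $b_i$. So it crosses $1$ exactly once. This shows the best response $\beta_i(\mathbf{b}_{-i})$ is unique, and every PNE satisfies the implicit equation at all $i$.

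\textbf{Existence and the role of $\alpha>n/(n-1)$.} Any solution must have $x_i<1-1/\alpha$ for every $i$. Summing over $i$ gives $1<n(1-1/\alpha)$, i.e.\ $\alpha>n/(n-1)$, which is exactly the hypothesis. It is necessary because otherwise some player would want to bid to infinity.

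For existence, we would restrict bids to a compact box $[c_i,M]$ and apply Brouwer's theorem to the continuous best-response map, or note quasi-concavity and apply Debreu--Glicksberg. We then check that for large $M$ the cap does not bind: a player with a huge bid gets $x_i\to0$, and the first-order condition then forces $b_i\approx \alpha c_i/(\alpha-1)$, which is bounded. Continuity of $\beta_i$ in $\mathbf{b}_{-i}$ follows from the implicit function theorem, since $g$ has nonzero slope at its unique root.

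\textbf{Uniqueness of the equilibrium.} This is the main obstacle. We would reparametrize by allocations. Given $\mathbf{x}$ in the simplex with $x_i<1-1/\alpha$, the equation determines $b_i=h_i(x_i):=\alpha c_i(1-x_i)/(\alpha(1-x_i)-1)$, which is increasing in $x_i$. Consistency with the $\alpha$-PAR requires $x_i\propto h_i(x_i)^{-\alpha}$, that is, $x_i h_i(x_i)^{\alpha}=\lambda$ for a common $\lambda>0$.

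The map $\phi_i(x)=x\,h_i(x)^{\alpha}$ is a product of positive increasing functions, so it is strictly increasing on $(0,1-1/\alpha)$, rising from $0$ to $\infty$. Thus $x_i=\phi_i^{-1}(\lambda)$ is continuous and strictly increasing in $\lambda$, running from $0$ to $1-1/\alpha$. Therefore $\sum_i\phi_i^{-1}(\lambda)$ increases strictly from $0$ to $n(1-1/\alpha)>1$. It equals $1$ at exactly one $\lambda$, which yields a unique $\mathbf{x}$ and hence a unique $\tilde{\mathbf{b}}$.

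This argument also gives existence directly, so it could replace the fixed-point step. The point needing care is verifying that each $\phi_i$ is monotone: $h_i$ increases because $\alpha(1-x)-1$ shrinks faster proportionally than $1-x$ does.
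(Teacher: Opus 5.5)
Your proposal is correct and follows essentially the paper's route. The first-order conditions give the implicit bid formula, and your $\phi_i(x)=x\,h_i(x)^\alpha=(\alpha c_i f(x))^\alpha$ is just a monotone transform of the paper's $f(x)=x^{1/\alpha}(1-x)/(\alpha(1-x)-1)$, so existence and uniqueness both come from the same monotone-sum and intermediate-value argument on the simplex. Your parametrization by the common value $\lambda$ with the limit $n(1-1/\alpha)>1$ is slightly cleaner than the paper's $g(1/n)\ge 1$ step, which implicitly needs $c_1$ to be the largest cost; your Brouwer/Debreu--Glicksberg detour is unnecessary and can be dropped.
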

\begin{proof}
Let $\mathbf{b}=(b_1,...,b_n)$ be a bid vector and let $o_i = \sum_{j\neq i} \frac{1}{b_j^\alpha}$. Then we can rewrite $u_i(b_i,\mathbf{b}_{-i}) = (b_i-c_i)\left(\frac{1}{1+b_i^\alpha o_i}\right)$. From this, we can calculate the first order conditions for $\mathbf{b}$ to be an equilibrium bid vector: 

\begin{align*}
    \frac{\partial u_i(b_i,\mathbf{b}_{-i})}{\partial b_i} = \frac{1+b_i^\alpha o_i+\alpha b_i^{\alpha-1}o_i(c_i-b_i)}{(1+b_i^\alpha o_i)^2} = 0 \\
    \iff 1+b_i^\alpha o_i+\alpha b_i^{\alpha-1}o_i(c_i-b_i)=0.
\end{align*}

Note that this expression is strictly increasing for $0\le b_i < c$ and strictly decreasing for $b_i>c$. Since $u'(0,\mathbf{b}_{-i})=1$ this implies $\frac{\partial u_i}{\partial b_i}$ has a unique zero corresponding to a global max for $u_i(b_i,\mathbf{b}_{-i})$. This implies satisfying the first-order conditions is sufficient for $b$ to be an equilibrium. Then, from the definition of $o_i$ and our allocation rule $x$, we have 
\begin{align*}
    \frac{1}{1+b_i^\alpha o_i} = x_i \iff  b_i^\alpha o_i =\frac{1}{x_i}-1.    
\end{align*}

Substituting this into our first-order condition gives 
\begin{equation}\label{eq:equilibrium_bids}
    \frac{1}{x_i}+\frac{\alpha c_i}{b_i}\left(\frac{1}{x_i}-1\right)-\frac{\alpha}{x_i}+\alpha = 0 \iff b_i = \frac{\alpha c_i(1-x_i)}{\alpha(1-x_i)-1}
\end{equation}

Letting $B=\sum_{i=1}^n b_i^{-\alpha}$, and noting $x_i = \frac{b_i^{-\alpha}}{B} \implies b_i=(x_iB)^{-1/\alpha}$ this gives 
\begin{equation*}
    \frac{\alpha c_ix_i^{\frac{1}{\alpha}}(1-x_i)}{\alpha(1-x_i)-1} = \frac{1}{B^\frac{1}{\alpha}} \ \ \forall i\in[n].
\end{equation*}

It follows that for $x$ to denote an equilibrium allocation
\begin{equation}\label{eq:equil_condition}
    \frac{ c_ix_i^{\frac{1}{\alpha}}(1-x_i)}{\alpha(1-x_i)-1} = \frac{ c_jx_j^{\frac{1}{\alpha}}(1-x_j)}{\alpha(1-x_j)-1} \ \ \forall i,j\in[n].
\end{equation}

Assume that such an allocation $\tilde{x}$ satisfying (\ref{eq:equil_condition}) exists, then we show that setting $\mathbf{\tilde{b}}$ according to (\ref{eq:equilibrium_bids}) with $x_i=\tilde{x}_i$ is consistent with Definition \ref{def:allocation_rule}. Let $\frac{\alpha c_i\tilde{x}_i^{\frac{1}{\alpha}}(1-\tilde{x_i})}{\alpha(1-\tilde{x}_i)-1} = \kappa$. Then from manipulating (\ref{eq:equilibrium_bids}) we get $\tilde{x_i} = (\alpha\kappa/\tilde{b}_i)^\alpha$. Combining this with  $\sum_{i=1}^n \tilde{x}_i=1$ gives $(\alpha\kappa)^\alpha = 1/\sum_{i=1}^n \tilde{b}_i^{-\alpha}$ in turn showing $\tilde{x}_i = \tilde{b}_i^{-\alpha}/\sum_{j=1}^n \tilde{b}_j^{-\alpha}$ as desired. 

Now, to show that a unique PNE exists, it is sufficient to show that there is a unique allocation $\tilde{x}$ satisfying (\ref{eq:equil_condition}). Define the function 
\begin{equation}\label{eq:f_def}
f(x):= \frac{x^\frac{1}{\alpha}(1-x)}{\alpha(1-x)-1}.
\end{equation} Then an allocation vector satisfying equation \ref{eq:equil_condition} is equivalent to finding a vector in the unit simplex $x\in \Delta^{n-1}$ such that $c_if(x_i)=c_jf(x_j) \ \forall i,j\in [n]$. Further note that for $x\in[0,1-1/\alpha)$, $f(x)$ is strictly increasing with $f(0)=0$ and $\lim_{x\rightarrow 1-1/\alpha} f(x) =\infty$. It follows that the inverse of $f$, $f^{-1}:[0,\infty)\rightarrow [0,1-1/\alpha)$ is well defined. Thus for a fixed $d\in[0,1-1/\alpha)$, let $g_i(d) = f^{-1}\left(\frac{c_1}{c_i}f(d)\right)$ with $g(d)=\sum_{i=1}^n g_i(d)$. $f$ being continuous and strictly increasing implies that $g$ is continuous and strictly increasing. Thus, $g(0)=0$ and $g(1/n) \ge 1$  implies there exists a unique $d^* \in (0,1/n]$ such that $\sum_{i=1}^n g_i(d^*)=1$. Note that $g(1/n)$ is well defined since $\alpha>n/(n-1)$. We can then take $\tilde{x}_i = g_i(d^*) \ \forall i\in[n]$ to get the unique equilibrium allocation. \hfill \qed
\end{proof}

\Cref{fig:two-player-bids-allos-equil} shows the equilibrium bids and allocations, denoted $\tilde{b}$ and $\tilde{x}$, respectively, for the two-player game. With $c_1=1$, we increase $c_2$ to demonstrate how the equilibrium bids and allocations evolve. For comparison, we include the honest bids and allocations (shown as solid lines in each subplot) arising from the truthful bidding in the DSIC mechanism from \Cref{sec:dsic}.

\begin{figure}
    \centering
    \includegraphics[width=\linewidth]{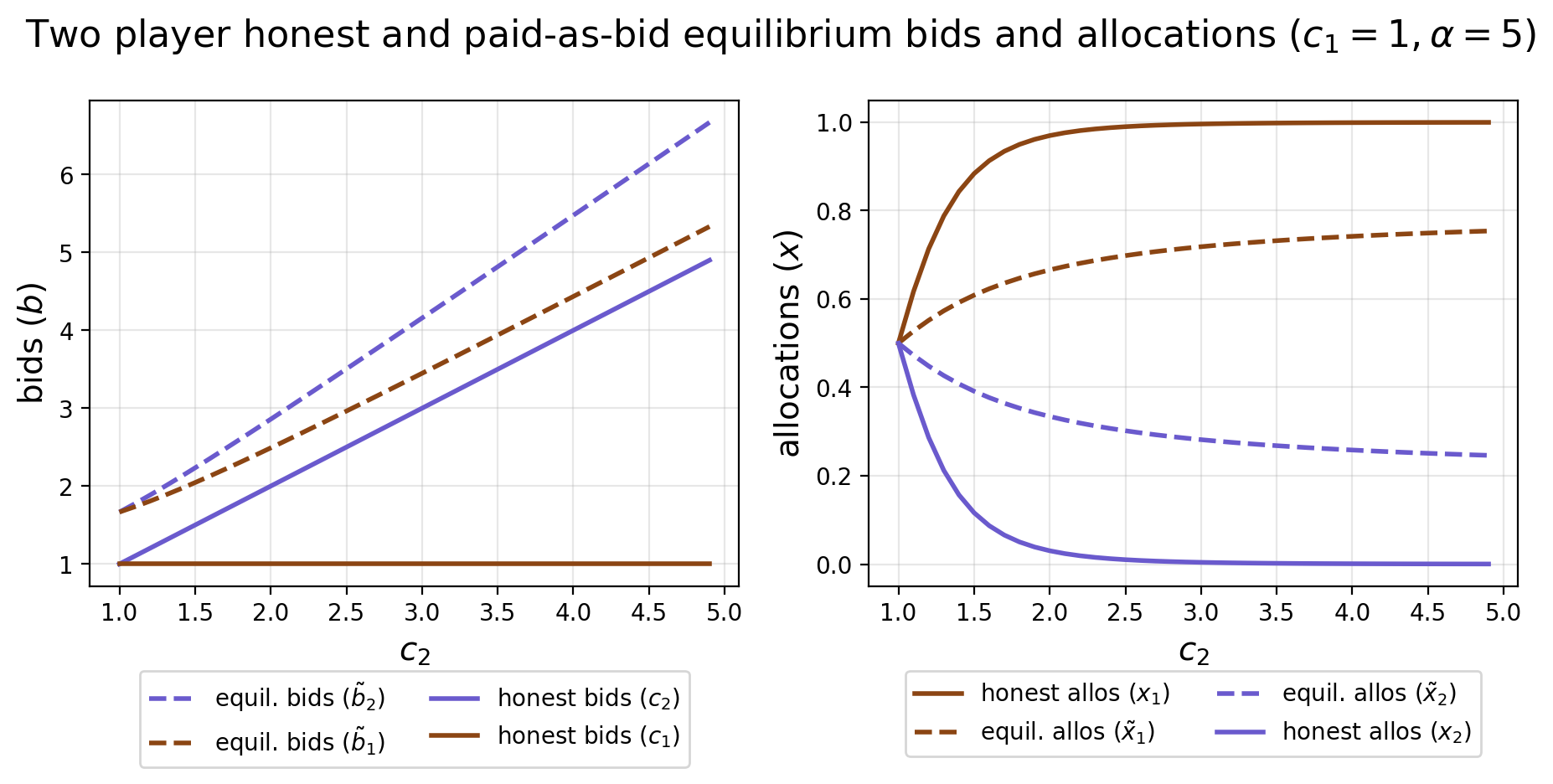}
    \caption{Honest and equilibrium-induced bids (for the paid-as-bid payment rule, \Cref{def:payment-proportional}) and allocations for the two-player game with $\alpha=5$. We see that the equilibrium allocations approach their limits of $1/5$ and $1-1/5$ as $c_2$ increases. Also note that the equilibrium bids for player one increase in $c_2$, while honest bids remain constant at $c_1=1$.}
    \label{fig:two-player-bids-allos-equil}
\end{figure}

The exact equilibrium bids and allocation can be computed numerically, given this implicit characterization for different cost vectors. A critical feature of the allocations at equilibrium is that they are bounded based on the size of $\alpha$ as $x_1 < 1-1/\alpha$ and $x_2 > 1/\alpha$, which we describe in the following corollary.

\begin{corollary}[Bounded allocations]\label{corr:bounded}
	Under the equilibrium bid vector, $\tilde{\mathbf{b}}$, each agent's allocation is bounded above as follows:
	\begin{align*}
		x_i(\tilde{\mathbf{b}}) < 1-1/\alpha.
	\end{align*}
\end{corollary}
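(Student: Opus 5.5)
The plan is to read the bound directly off the equilibrium characterization in \Cref{thm:unique_pne}. There, the equilibrium bid of each agent satisfies
\begin{align*}
    \tilde{b}_i = \frac{\alpha c_i(1-\tilde{x}_i)}{\alpha(1-\tilde{x}_i)-1},
\end{align*}
where $\tilde{x}_i = x_i(\tilde{\mathbf{b}})$. The idea is that this expression only yields a legitimate best response when its denominator is positive. If the denominator were positive, we would immediately get $\alpha(1-\tilde{x}_i) > 1$, that is, $\tilde{x}_i < 1-1/\alpha$.

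The first step is to rule out a nonpositive denominator. Recall from the proof of \Cref{thm:unique_pne} that $u_i(b_i,\mathbf{b}_{-i}) = (b_i-c_i)x_i(\mathbf{b})$ has a unique global maximizer. That maximizer lies at $b_i>c_i$: for $b_i\le c_i$ the utility is nonpositive, while any $b_i>c_i$ gives strictly positive utility because $x_i>0$. Hence $\tilde{b}_i>c_i>0$.

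Substituting into the formula gives
\begin{align*}
    \frac{\tilde{b}_i}{c_i} = \frac{\alpha(1-\tilde{x}_i)}{\alpha(1-\tilde{x}_i)-1} > 1,
\end{align*}
and the numerator $\alpha(1-\tilde{x}_i)$ is nonnegative. Set $y=\alpha(1-\tilde{x}_i)$ and consider the cases:
\begin{itemize}
    \item If $y-1<0$, then $y/(y-1)\le 0$, contradicting positivity of $\tilde{b}_i/c_i$.
    \item $y-1=0$ is excluded because the first-order condition \eqref{eq:equilibrium_bids} would then read $\alpha c_i(1-\tilde{x}_i)/\tilde{b}_i \cdot(\ldots)$ inconsistently. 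Concretely, multiplying \eqref{eq:equilibrium_bids} through gives $b_i(\alpha(1-x_i)-1)=\alpha c_i(1-x_i)$, so $y=1$ forces $c_i(1-x_i)=0$, which is impossible since $c_i>0$ and $x_i<1$ (all $\alpha$-PAR shares are strictly less than $1$ when $n\ge2$).
\end{itemize}
Therefore $y>1$, and the strict bound $\tilde{x}_i<1-1/\alpha$ follows.

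As an alternative route, the same conclusion is built into the uniqueness argument of \Cref{thm:unique_pne}. There, $\tilde{x}_i=g_i(d^*)$, where $g_i$ takes values in the range of $f^{-1}$, namely $[0,1-1/\alpha)$. Citing this directly would suffice, and I would mention it as the one-line proof.

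The only delicate step is justifying the sign of the denominator, i.e.\ excluding the spurious branch of the first-order condition. I would handle it with the $\tilde{b}_i>c_i$ argument above.
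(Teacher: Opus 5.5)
Your main argument is correct and is the paper's proof: positivity of $\tilde b_i$ in the formula $\tilde b_i=\alpha c_i(1-\tilde x_i)/(\alpha(1-\tilde x_i)-1)$ forces the denominator to be positive. Your justification of $\tilde b_i>c_i$ and your exclusion of the $y=1$ case fill in what the paper's one-line proof leaves implicit. I would not present the ``alternative'' one-line route as a proof, though: the uniqueness argument in \Cref{thm:unique_pne} searches for allocations only in $[0,1-1/\alpha)$ without justifying that restriction, which is essentially the fact this corollary supplies.
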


\begin{proof}
    From the characterization of the equilibrium bids (\ref{eq:equilibrium_bids}), $b_i >0$  implies $a(1-x_i)-1 >0 \implies x_i < 1-1/\alpha$. \hfill \qed
\end{proof}

\Cref{fig:two-player-equil} shows the equilibrium allocations and social costs for the two-player game. Again, we fix $c_1=1$ and increase $c_2$. We show various values of $\alpha$ to demonstrate (i) the increase in the curvature of the $\alpha$-PAR allocation rule resulting in faster concentration of the allocation on the lower-cost player, and (ii) the convergence of the allocations to $x_1 \to 1-1/\alpha$ and $x_2 \to 1/\alpha$ in the left subplot (shown as horizontal lines with matching line styles). 
The right subplot shows the scaling of the social cost in $c_2$.

\begin{figure}
    \centering
    \includegraphics[width=\linewidth]{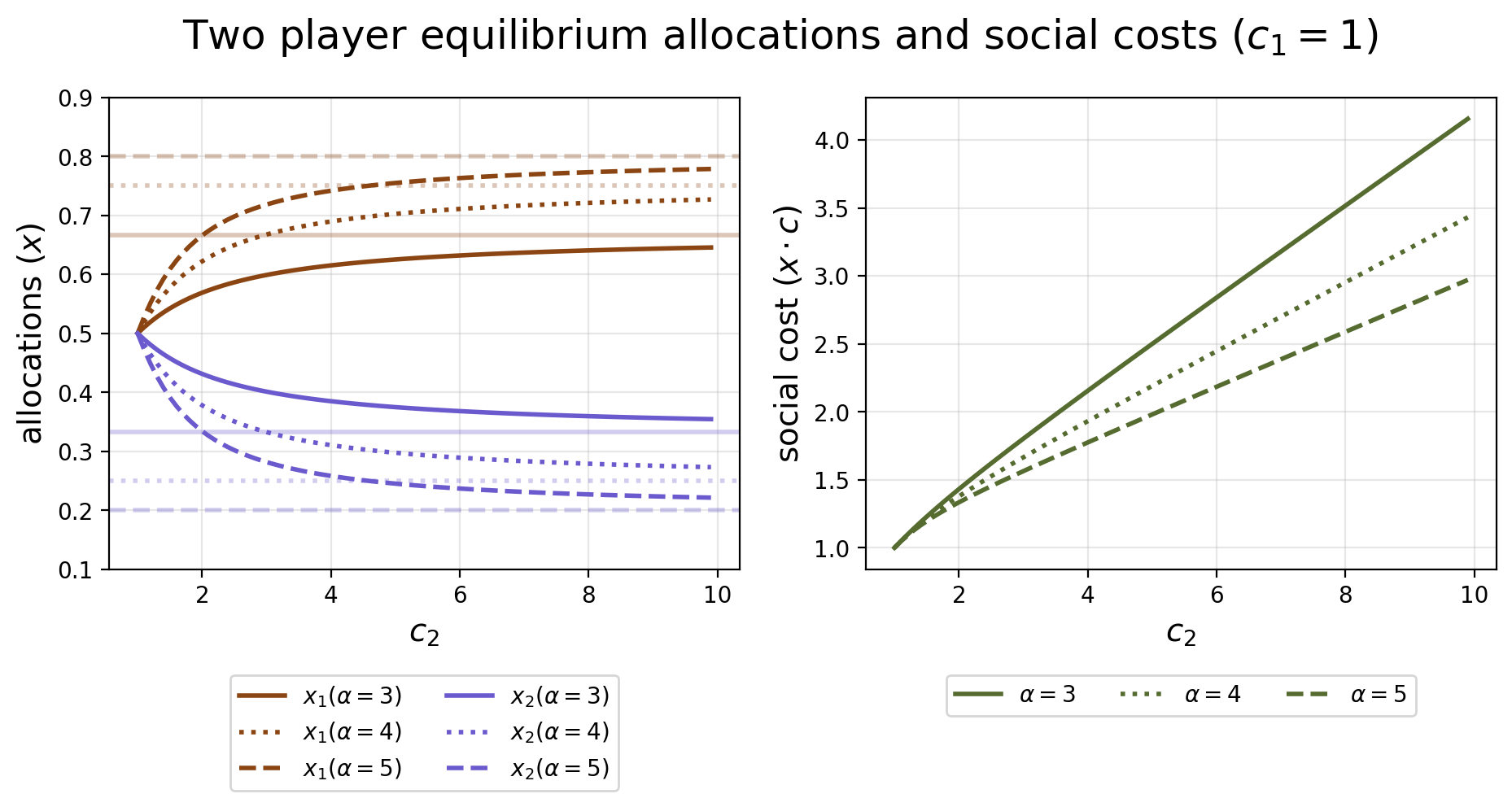}
    \caption{Equilibrium allocations for the two-player game and corresponding social costs. The left plot shows the $1/\alpha$ and $1-1/\alpha$ bounds for the lower and higher allocations, respectively, as horizontal lines.
    }
    \label{fig:two-player-equil}
\end{figure}

Note that, without additional constraints, the paid-as-bid mechanism can result in arbitrarily high payments. 
This is easy to see with \Cref{corr:bounded}; consider the two-player game with $\alpha=3$, $c_1=1$, and $c_2 \to \infty$. We know that $x_1 \to  2/3, x_2 \to 1/3$, and by individual rationality, we must pay player two more than $c_2 \cdot x_2$. Thus, the total payment could be unbounded. As in the DSIC mechanism, this can be resolved by setting a reserve price with a maximal reportable cost of $\bar{b}$. However, this would change the equilibrium characterization given above. Characterizing the new equilibrium is not straightforward, as bidders with costs below $\bar{b}$ might still have bid above $\bar{b}$ at equilibrium. Thus, we leave characterizing the equilibrium with a reserve for future work and focus our analysis on the mechanism without a reserve. 


We now turn our attention to estimating the worst-case scaling of the social cost, which is equivalent to the price of anarchy when we fix $c_1=1$. We characterize the worst-case cost vector $\mathbf{c}$ for the PoA parameterized by the ratio $C = c_{\max}/c_{\min}$ for $\alpha \le 5$ 

\begin{lemma}\label{lemm:worst-case-pab}
    With $n$ bidders where $c_{\max}/c_{\min} \le C$ and with $\alpha \le 5$, the worst-case PoA is realized with the cost vector $[1,C,C,\ldots, C]$.
\end{lemma}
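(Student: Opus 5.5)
Throughout, normalize $c_{\min}=c_1=1$; both the allocation and the social cost are homogeneous of degree zero in $\mathbf{c}$, as in the proof of \Cref{thm:dsic-worst-case}. By \Cref{thm:unique_pne} the equilibrium allocation is unique, so the PoA equals the equilibrium social cost $SC(\mathbf{c})=\sum_i c_i\tilde x_i$ with $\tilde x$ defined by $c_if(\tilde x_i)=\kappa$ for all $i$ and $\sum_i\tilde x_i=1$, where $f$ is the function in (\ref{eq:f_def}). Let $h=f^{-1}$, so $\tilde x_i=h(\kappa/c_i)$ and $\kappa$ is pinned down by $\sum_i h(\kappa/c_i)=1$. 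The goal is to show that $SC$ is maximized over $\{1\le c_i\le C\}$ at $c_2=\dots=c_n=C$.

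\textbf{Step 1 (monotonicity in each high cost).} Fix $i\ge2$ and compute $\partial SC/\partial c_i$ by implicit differentiation. Set $y_j=\kappa/c_j$ and $w_j=h'(y_j)/c_j>0$. Differentiating the constraint $\sum_j h(\kappa/c_j)=1$ gives
\begin{align*}
\frac{\partial\kappa}{\partial c_i}=\frac{w_i\kappa/c_i}{\sum_j w_j}.
\end{align*}
Differentiating $SC=\sum_j c_jh(\kappa/c_j)$ gives
\begin{align*}
\frac{\partial SC}{\partial c_i}=\tilde x_i-\frac{\kappa h'(y_i)}{c_i}+\frac{\partial\kappa}{\partial c_i}\sum_j h'(y_j).
\end{align*}
Using $h'=1/f'(h)$, each quantity can be written in terms of the allocations $x_j$ and the elasticity $\eta(x)=x f'(x)/f(x)$. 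Here
\begin{align*}
\eta(x)=\frac1\alpha-\frac{x}{1-x}+\frac{\alpha x}{\alpha(1-x)-1}.
\end{align*}
The sign condition then reads
\begin{align*}
x_i\Bigl(1-\frac1{\eta(x_i)}\Bigr)+\frac{x_i/(c_i\eta(x_i))\cdot\sum_j x_j/\eta(x_j)}{\sum_j x_j/(c_j\eta(x_j))}\ \ge\ 0 .
\end{align*}
I want to show this quantity is positive whenever $c_i\ge c_1$ and $\alpha\le5$. The mechanism is intuitive: $\eta$ is small for small $x$, so the equilibrium allocation of a high-cost player shrinks slowly, and it cannot drop below roughly $1/\alpha$-type levels by \Cref{corr:bounded}. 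Its cost increase therefore dominates the loss of share. This is the contrast with the DSIC case, where an interior $r$ was optimal.

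\textbf{Step 2 (reduction).} Given Step 1, $SC$ is nondecreasing in each $c_i$ for $i\ge2$ on the box $[1,C]^{n-1}$. Maximizing coordinatewise therefore pushes every $c_i$ to $C$, which yields $[1,C,\dots,C]$. If only a weaker statement holds, namely monotonicity along the diagonal together with a symmetrization argument, I would fall back on the first-order argument sketched for \Cref{thm:dsic-worst-case}. That argument shows that stationary points in the interior have all $c_i$ ($i\ge2$) equal, and then checks the boundary.

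\textbf{Main obstacle.} The hard part is the sign inequality in Step 1, and it is where $\alpha\le5$ enters. For large $\alpha$, $\eta$ grows quickly near $1-1/\alpha$, and the low-cost player's share could absorb enough mass that raising $c_i$ lowers $SC$. I would try to bound the worst case by noting that $x_1$ is the largest share and $x_i\ge$ the smallest. This lets me replace the sums by their extreme terms and reduce to a two-variable inequality in $(x_1,x_i)$ on the region permitted by \Cref{corr:bounded}. I would then verify that inequality for $\alpha\le5$, analytically if possible or by a careful monotonicity/numerical check in $\alpha$. I expect this verification to be somewhat computational.
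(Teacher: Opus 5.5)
Your Step 1 is false, so the main route cannot work: the equilibrium social cost is \emph{not} coordinatewise nondecreasing on the box, even for $\alpha\le5$. Take $n=3$, $\alpha=3$ and $\mathbf{c}=[1,1,c_3]$. More generally, take any $n,\alpha$ with $1/(n-1)<1-1/\alpha$ and $\mathbf{c}=[1,\dots,1,c_n]$.
\begin{itemize}
\item As $c_3\to\infty$, the two cheap players' shares tend to $1/2<2/3$, so $\kappa$ stays bounded and $f(\tilde x_3)=\kappa/c_3\to0$.
\item Since $f(x)\sim x^{1/\alpha}/(\alpha-1)$ near $0$, this gives $\tilde x_3=\Theta(c_3^{-\alpha})$. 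Hence $c_3\tilde x_3\to0$ and $SC\to1$.
\item Numerically, $SC([1,1,2])\approx1.155$ while $SC([1,1,10])\approx1.035$, and both vectors lie in the box for $C=10$.
\end{itemize}
Your displayed sign condition also contains a slip. The ratio should be $\bar c:=\sum_j c_jx_j/\eta(x_j)\big/\sum_j x_j/\eta(x_j)$, a weighted arithmetic mean of the costs, not the harmonic one you wrote. With that correction the criterion reads $\operatorname{sign}(\partial SC/\partial c_i)=\operatorname{sign}\bigl(\eta(x_i)-1+\bar c/c_i\bigr)$. For a lone expensive player, $\eta(x_i)\to1/\alpha$ and $\bar c\to1$, so this is negative.

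Your heuristic misreads \Cref{corr:bounded}. It caps every share \emph{above} by $1-1/\alpha$, which only forces the players other than $i$ to receive at least $1/\alpha$ jointly. It gives no lower bound on an individual expensive player's share. So no restriction on $\alpha$ rescues Step 1, and the sign inequality you plan to verify simply fails.

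What holds is weaker. Your fallback names the right architecture, but it omits both ingredients the paper needs.

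\textbf{(i) Symmetrization, valid for all $\alpha>1$.} The criterion can be written $\operatorname{sign}(\partial H/\partial c_k)=\operatorname{sign}(\gamma-\Psi(x_k))$, where $\gamma$ is \emph{independent of $k$} and $\Psi(x)=(f(x)-xf'(x))/f(x)^2$ is strictly decreasing on $(0,1-1/\alpha)$. At a maximizer, a player $k$ with the largest cost $m$ has $\partial H/\partial c_k\ge0$. Any $j\ge2$ with $c_j<m$ has $x_j>x_k$, so $\Psi(x_j)<\Psi(x_k)\le\gamma$ and $\partial H/\partial c_j>0$. Raising $c_j$ is feasible, which contradicts maximality. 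This argument handles the boundary faces $c_j=1$ and $c_k=C$ directly. Recycling the DSIC argument does not, since that argument only treats interior stationary points of an explicit formula.

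\textbf{(ii) Monotonicity along $[1,r,\dots,r]$.} Write $a=x_1$, so $H=r-(r-1)a(r)$. Differentiating $f(a)=rf(b)$ reduces $H'(r)>0$ to $(1-a)f'(a)/f(a)+bf'(b)/f(b)>1-1/r$. The left side is at least $1/(\alpha a)+1/(\alpha(1-a)-1)\ge4/(\alpha-1)$. This bound is $\ge1$ precisely when $\alpha\le5$, and it is the only place that hypothesis is used, not a global sign condition.
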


\begin{figure}
    \centering
    \includegraphics[width=0.5\linewidth]{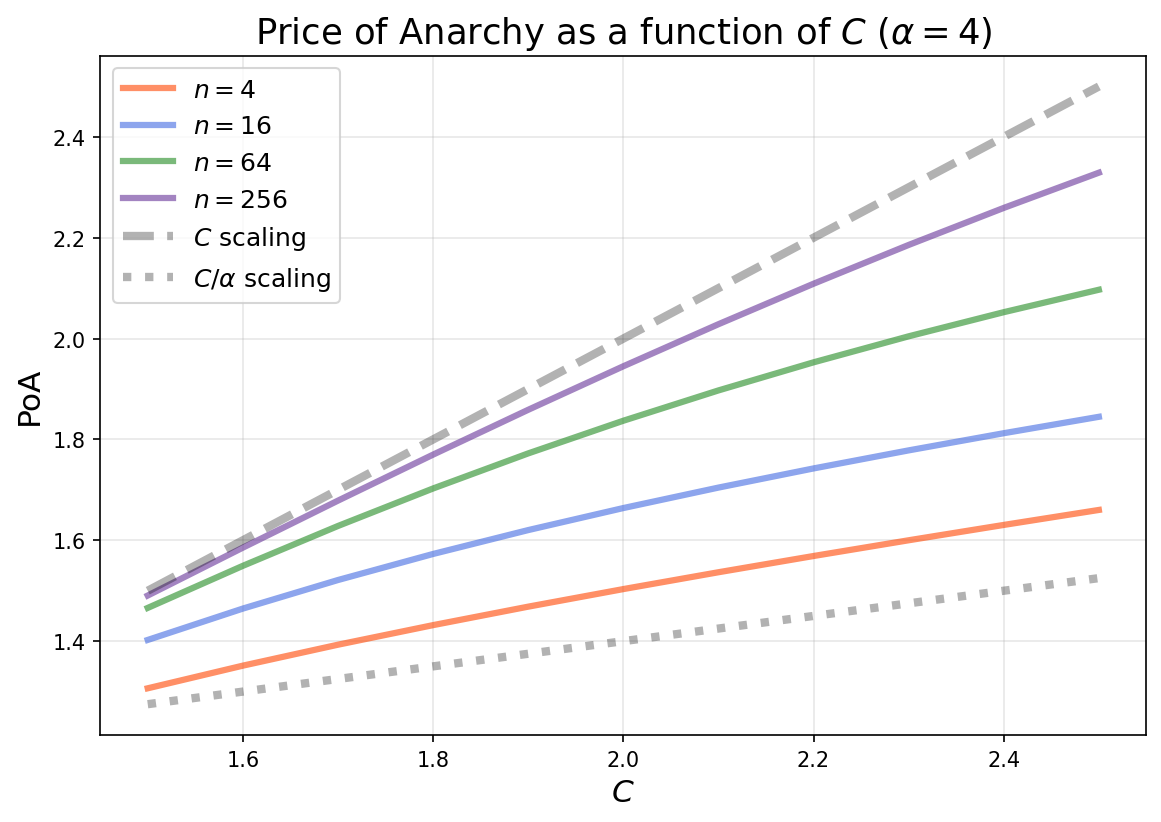}
    \caption{Price of anarchy scaling in $C = c_{\max} /c_{\min}$ for $\alpha=4$ and various values of $n$.}
    \label{fig:poa-c-ca-scaling}
\end{figure}

From \Cref{corr:bounded}, we know the PoA will always be in the range $[C/\alpha, C]$ because the higher-cost players will always be allocated at least $1/\alpha$ in aggregate. We also know that as $n$ increases, the PoA goes from $C/\alpha$ to $C$ because each additional player is allocated a non-zero amount, which pulls some of the allocation from the low-cost player. We use this corollary along with the PNE characterization given in \Cref{eq:equil_condition} to numerically calculate the PoA as a function of $C,\alpha$, and $n$. \Cref{fig:poa-c-ca-scaling} plots this PoA as a function of $C$ for $\alpha=4$ and various values of $n$.
As $n$ increases, the PoA shifts from the $C/\alpha$ scaling regime (shown as the dotted line) to $C$ scaling (shown as the dashed line).  For $\alpha >5$, via the same proof, we have that the worst case cost vector is still of the form $[1,r,\dots,r]$ but for sufficiently large $n$ and a relatively small $C$, the worst case $r$ need not be $C$.

\section{Conclusion}\label{sec:conclusion}

This work serves as a starting point for formalizing procurement mechanisms that do not optimize purely for efficiency. While we explore a range of mechanisms in this article, none is a panacea for all the desiderata presented in \Cref{sec:model}. The DSIC mechanism in \Cref{sec:dsic} has serious drawbacks of not being Sybil-proof and having complex payment rules, each of which is addressed in isolation by the Tullock and the paid-as-bid mechanism of \Cref{sec:tullock,sec:PAB}, respectively. Tullock contests are Sybil-proof at the cost of not being ex-post safe; the paid-as-bid mechanism has a simple and interpretable payment rule structure, but loses Sybil-proofness. The DSIC and paid-as-bid mechanisms, by default, have arbitrarily high payments, but can both be modified to bound the protocol's cost. As such, the immediate domain for future theoretical work is proving the existence or non-existence of a procurement mechanism that is Sybil-proof, and ex-post safe with non-winner-take-all equilibria.

We believe that this work is well-motivated as blockchains continue to mature and develop alternative roles for nodes participating in the network. For example, Ethereum is considering migrating to a ZK-proof-based execution environment, where each block is built along with a proof of validity, and attesters are only required to verify the proof rather than execute the transactions in their entirety. This paradigm shift could introduce the role of ``provers'' explicitly into consensus, and this work shows how different modeling choices could lead to different mechanisms. If the cost of generating a proof continues to drop (it is currently estimated to be about 5 cents per block) and is publicly known (e.g., through proof data aggregation layers like https://ethproofs.xyz/), then either the Tullock contest or the paid-as-bid mechanisms might be the best choice. When considering the possibility of a Sybil gadget (e.g., staking) that overlays the actual procurement mechanism, and the explicit goal of making proving decentralized, choosing an $\alpha$-PAR might be well motivated. We hope that practitioners and builders contribute to the dialogue and help narrow down the right modeling decision for decentralized procurement.
\paragraph*{Model extensions}
While $\alpha$-PARs allow the designer to trade off between distribution of the allocation and efficiency, it remains a subject of debate as to the exact metric of fairness to optimize for. \Cref{lemma:opt} demonstrates that $\alpha$-PARs minimize a scaled social cost, but there are other notions of fairness (e.g., Nash Social Welfare \cite{nash1950bargaining}) that may be of interest. More generally, micro-founding fairness in blockchain settings remains important in justifying the value of decentralization that is often cited as the goal of existing systems. Whether the decentralization offers unique properties to the system (e.g., censorship resistance) or aims to achieve better user outcomes (e.g., more competition and thus lower prices), it is often accepted as an unquestioned axiom in blockchain protocol design without proper justification.

Extending the model to the Bayesian setting is also an open problem. We justify the use of the complete-information setting and the study of PNE because the open, repeated nature of the auctions may allow providers to learn about their competitors' private costs. While this may be true, modeling this knowledge in the incomplete-information setting as defining a prior over the competitors' costs and studying the Bayes-Nash Equilibria would yield different outcomes that are worth considering. Further, while we use the repeated nature of the game to motivate trading off efficiency for a more distributed allocation, explicitly modeling the dynamics in a repeated game would be an important way to sanity-check the effectiveness of the mechanisms we propose.

\paragraph*{Empirical extensions}
The blockchain industry has well-established norms for open-source software and public data, which affords excellent visibility into mechanisms being implemented in production today with real money on the line – an excellent empirical testbed. As the nascent prover markets (e.g., Succinct, RISC Zero, and =Nil;) evolve, measuring the practical bidding strategies and resulting allocations will provide a key empirical grounding for this work. Measurements for the number of bidders participating, the size of available rewards, and the hardware costs will all inform the theoretical study of permissionless procurement mechanism design. This data will also help ground which of the presented mechanisms may be most practical. For example, if it turns out that the prover set is relatively small and non-anonymous, using a non-Sybil-proof mechanism (or adding a Sybil-proofness gadget like staking) may be good enough in practice.


\bibliographystyle{splncs04}
\bibliography{aft}

\appendix


\newpage

\section{Supplementary Material}

\subsection{\Cref{sec:dsic} Proofs}\label{app:dsic-proof}

\begin{lemma}[Optimization problem]
	The allocation rule $x_i = c_i^{-\alpha} / \sum_{j=1}^n c_j^{-\alpha}$ is the solution to the following constrained optimization problem,
	\begin{align*}
		\min_{\mathbf{x} \in [0, 1]^n}&\sum_{i=1}^n c_ix_i^{1+1/\alpha}\\
		\text{s.t.,} &\sum_{i=1}^n x_i = 1.
	\end{align*}
\end{lemma}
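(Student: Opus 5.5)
We argue by convexity and the Lagrangian/KKT conditions. Since $\alpha>0$, the exponent $p:=1+1/\alpha$ satisfies $p>1$. Hence each map $x\mapsto c_i x^{p}$ with $c_i>0$ is strictly convex on $[0,1]$, and so is the objective $\Phi(\mathbf{x})=\sum_{i} c_i x_i^{p}$. The feasible set, the simplex $\{\mathbf{x}\in[0,1]^n:\sum_i x_i=1\}$, is compact and convex. Therefore a minimizer exists and is unique. It then suffices to exhibit one feasible point satisfying the KKT conditions, because for a convex program with affine constraints these conditions are sufficient for global optimality.

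First we dispose of the box constraints. We guess that they are inactive at the optimum: an interior point with $0<x_i<1$ for all $i$ (when $n\ge 2$; the case $n=1$ is trivial). We then form the Lagrangian $L(\mathbf{x},\lambda)=\sum_i c_i x_i^{p}-\lambda\left(\sum_i x_i-1\right)$. Setting $\partial L/\partial x_i=0$ gives $c_i\,p\,x_i^{1/\alpha}=\lambda$, so $x_i=\left(\lambda/(p c_i)\right)^{\alpha}\propto c_i^{-\alpha}$. Imposing $\sum_i x_i=1$ fixes the constant and yields $x_i=c_i^{-\alpha}/\sum_j c_j^{-\alpha}$, together with $\lambda=p\left(\sum_j c_j^{-\alpha}\right)^{-1/\alpha}$. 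Each of these $x_i$ lies strictly in $(0,1)$, so the box constraints are indeed inactive and their multipliers are zero. All KKT conditions therefore hold.

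To be safe at the boundary, we also verify optimality directly rather than relying on a constraint qualification. By convexity, for any feasible $\mathbf{y}$ we have $\Phi(\mathbf{y})\ge \Phi(\mathbf{x})+\nabla\Phi(\mathbf{x})\cdot(\mathbf{y}-\mathbf{x})$. At the candidate point $\nabla\Phi(\mathbf{x})=\lambda\mathbf{1}$, and $\mathbf{1}\cdot(\mathbf{y}-\mathbf{x})=0$, so $\Phi(\mathbf{y})\ge\Phi(\mathbf{x})$. Strict convexity upgrades this to strict inequality for $\mathbf{y}\neq\mathbf{x}$, which gives uniqueness.

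There is essentially no hard step. The only point needing care is ruling out boundary minimizers, where some $x_i=0$ and the derivative $c_i p x_i^{1/\alpha}$ vanishes. This is settled by the supporting-hyperplane inequality above, which applies to every feasible $\mathbf{y}$, including boundary points. We would also state explicitly that $c_i>0$ is assumed, as is implicit in the definition of $\alpha$-PARs.
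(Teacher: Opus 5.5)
Your proposal is correct and follows the same route as the paper: form the Lagrangian, solve the stationarity condition $c_i x_i^{1/\alpha}\propto\lambda$ to get $x_i\propto c_i^{-\alpha}$, and normalize using $\sum_i x_i=1$. The paper stops at the stationary point, so your strict-convexity and supporting-hyperplane step is a real addition: it supplies the global optimality, the uniqueness, and the exclusion of boundary minimizers that the paper's argument leaves implicit. (The paper also rescales the objective by $1/(1+1/\alpha)$, which only removes the factor $p$ from $\lambda$.)
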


\begin{proof}
Let
\begin{align*}
F(\mathbf{x}) = \sum_{i=1}^n \frac{c_i x_i^{1+1/\alpha}}{1+1/\alpha}.
\end{align*}
With the constraint $\sum_{j=1}^n x_j = 1$, we have the Lagrangian,
\begin{align*}
    \mathcal{L}(\mathbf{x}) &= F (\mathbf{x}) + \lambda \left(1-\sum_{j=1}^n x_j\right) \\ 
    &= \sum_{i=1}^n \frac{c_i x_i^{1+1/\alpha}}{1+1/\alpha} + \lambda \left(1-\sum_{j=1}^n x_j\right).
\end{align*}
Taking partials, we have
\begin{alignat*}{2}
    \frac{\partial \mathcal{L}}{\partial x_i} &= c_i x_i^{1/\alpha} - \lambda = 0 &&\implies x_i = \left(\frac{\lambda}{c_i}\right)^\alpha 
\end{alignat*}
Combining these with the full allocation constraint, we have
\[
    \sum_{j=1}^n \left(\frac{\lambda}{c_j}\right)^\alpha = 1 \implies \lambda = \left(\frac{1}{\sum_{j=1}^n c_j^{-\alpha}}\right)^{1/\alpha}.
\]
Solving for the final allocation, we have,
\begin{align*}
    x_i &= \left(\frac{\lambda}{c_i}\right)^{\alpha} = \left(\left(\frac{1}{\sum_{j=1}^n c_j^{-\alpha}}\right)^{1/\alpha}/c_i\right)^{\alpha} = \frac{c_i^{-\alpha}}{\sum_{j=1}^n c_j^{-\alpha}}.
\end{align*}
\hfill \qed
\end{proof}

\begin{lemma}[Worst-case social-cost vector]\label{lemm:dsic-worst-case-cost}
    Assuming $\alpha \neq 1$, there exists a $r\ge 1$ such that the worst-case social cost vector is of the form   
    \begin{align*}
        \mathbf{c}_{\text{worst}} = [1, \underbrace{r, r, \ldots, r}_{n-1 \text{ players}}].
    \end{align*}
    \end{lemma}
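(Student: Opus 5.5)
We normalize $c_1=1$ and restrict to $c_i\ge 1$ for all $i$; this is without loss of generality because the social cost
\[
SC(\mathbf{c})=\frac{\sum_i c_i^{1-\alpha}}{\sum_i c_i^{-\alpha}}
\]
is homogeneous of degree one and the optimum equals $c_{\min}$. Write $N=\sum_i c_i^{1-\alpha}$ and $D=\sum_i c_i^{-\alpha}$, so that $SC=N/D$. The plan is a first-order analysis in each free coordinate $c_i$, $i\ge 2$.

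\textbf{First-order condition.} Differentiating gives
\[
\frac{\partial SC}{\partial c_i}=\frac{(1-\alpha)c_i^{-\alpha}D+\alpha c_i^{-\alpha-1}N}{D^2}.
\]
Hence at an interior critical point with $S:=SC(\mathbf{c})$,
\[
(1-\alpha)+\alpha\,\frac{S}{c_i}=0,\qquad\text{that is,}\qquad c_i=\frac{\alpha S}{\alpha-1}.
\]
For $\alpha\neq 1$ the right-hand side is the same for every $i\ge 2$, so every non-boundary coordinate must equal the common value $r:=\alpha S/(\alpha-1)$. This is exactly where the hypothesis $\alpha\ne1$ enters: when $\alpha=1$ we get $\partial SC/\partial c_i = c_i^{-2}N/D^2>0$, so $SC$ is monotone and no interior critical point exists. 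For $\alpha<1$ the derivative is positive for all large $c_i$, so I would expect the supremum to be approached only as $c_i\to\infty$ (where $SC$ grows without bound). I would handle that regime separately, or interpret it as the limiting case $r\to\infty$, since the main theorem assumes $\alpha\ge1$.

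\textbf{Existence of a maximizer and ruling out boundary points.} Take $\alpha>1$. As $c_i\to\infty$ the terms $c_i^{1-\alpha}$ and $c_i^{-\alpha}$ vanish, so such a player effectively drops out and the game reduces to one with fewer players. I would set up the argument as an induction on the number of effectively present players. I would also note that the supremum with $k$ players is at least the supremum with $k-1$ players: sending a coordinate to infinity recovers the smaller instance. Coordinates at the boundary $c_i=1$ can be handled by the same derivative, which equals $D^{-2}\bigl((1-\alpha)D+\alpha N\bigr)\cdot c_i^{-\alpha-1}$ up to positive factors. At $c_i=1$ this has sign $(1-\alpha)+\alpha S$, which is positive whenever $S\ge 1$, so raising $c_i$ above $1$ strictly increases $SC$. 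Therefore no coordinate $c_i$ with $i\ge2$ sits at $1$ in a maximizer.

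\textbf{Main obstacle.} The difficulty is excluding the configuration in which some coordinates are at infinity while the others sit at a finite common $r$. I would compare $f_k(r)=(1+(k-1)r^{1-\alpha})/(1+(k-1)r^{-\alpha})$ across $k$, aiming to show that $\max_r f_k$ is increasing in $k$. Then the all-finite configuration $[1,r,\dots,r]$ with $k=n$ dominates every truncated configuration. As a fallback, I would argue directly: bringing a player from infinity down to $c=\alpha S/(\alpha-1)$ changes $N-SD$ by $c^{-\alpha}(c-S)>0$, since $c>S$. That strictly raises $SC$, so the supremum is attained at a finite interior point. There the first-order condition gives the form $[1,r,\dots,r]$.
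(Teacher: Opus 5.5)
Your proof is correct and follows the paper's argument. Your coordinatewise condition $(1-\alpha)+\alpha S/c_i=0$, i.e.\ $c_i=\alpha S/(\alpha-1)$, is exactly the paper's condition $(1-\alpha)(1+G)c_i+\alpha(1+H)=0$, since $S=(1+H)/(1+G)$, and $\alpha\neq 1$ forces all $c_i$, $i\ge 2$, to coincide.

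The paper simply assumes the maximizer exists and is interior, and your extra steps supply what it omits:
\begin{itemize}
\item compactifying so that players at $\infty$ drop out;
\item the weighted-average observation that re-inserting a player at any cost $c>S$ strictly raises $SC$, which on its own makes the induction over $k$ unnecessary;
\item the positive sign $1+\alpha(S-1)>0$ of the derivative at $c_i=1$;
\item the remark that for $\alpha<1$ no worst case exists at all.
\end{itemize}
So your version is more complete than the paper's, and it shows the lemma is really a statement about $\alpha>1$.
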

    \begin{proof}
    Let the cost maximizing vector be given by $\mathbf{c}^*$. Consider the function $g_k(s)$ representing the social cost as bidder $k$'s cost varies, holding the other bidders' costs fixed, i.e. $g_k(s) = SC([c_1,...,c_{k-1},s,c_{k+1},...,c_n])$. Note that for $k>1$, we must have $g_k'(c_i^*) = 0$. Let  $G_{-i}=\sum_{j\neq i} c_j^{-\alpha}$ and $H_{-i} = \sum_{j\neq i} c_j^{1-\alpha}$ and. Then, we can calculate 
    \begin{align*}
        g_i'(x) &= \frac{(1-\alpha)x^{-\alpha}(x^{-\alpha}+G_{-i})+(x^{1-\alpha}+H_{-i})(\alpha x^{-\alpha-1})}{(x^{-\alpha}+G)^2} \\ 
        &= \frac{x^{-2\alpha}+(1-\alpha)G_{-i}x^{-\alpha}+\alpha H_{-i} x^{-\alpha-1}}{(x^{-\alpha}+G_{-i})^2} \\ 
        &=\frac{x^{\alpha-1}(x^{1-\alpha}+(1-\alpha)G_{-i}x+\alpha H_{-i})}{(1+G_{-i}x^\alpha)^2}
    \end{align*}
    Thus for all $i>1$, taking $g'_i(c_i) = 0$ gives us 
    \begin{align*}
        c_i^{1-\alpha}+(1-\alpha)c_i\sum_{j\neq i} c_{j}^{-\alpha}+\alpha\sum_{j\neq i}c_j^{1-\alpha} = 0.
    \end{align*}
    Now let $G=\sum_{j>1} c_j^{-\alpha}$ and $H=\sum_{j>1} c_j^{1-\alpha}$ then we can rewrite 
    \begin{align*}
        g'_i(c_i) = c_i^{1-\alpha}+(1-\alpha)c_i(G-c_i^{-\alpha}+1)+\alpha(H-c_i^{1-\alpha}+1).
    \end{align*}
    Note that $c_i^{1-\alpha} - (1-\alpha)c_i\cdot c_i^{-\alpha}-\alpha c_i^{1-\alpha} =0$ simplifying 
    \begin{align*}
        g'_i(c^*_i) = (1-\alpha)(1+G)c^*_i + \alpha(1+H)=0.    
    \end{align*}
    Since we assume $\alpha \neq 1$ and this holds for all $i>1$, we have that for all $i,j>1$, $c^*_i=c^*_j$. \hfill \qed
\end{proof}

\begin{lemma}
    The worst-case social cost is bounded above by $1+\left(\frac{n}{\alpha}\right)^{\frac{1}{\alpha}}.$
\end{lemma}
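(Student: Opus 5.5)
By Lemma~\ref{lemm:dsic-worst-case-cost} (for $\alpha\neq 1$) it suffices to bound $f(r)=\frac{1+(n-1)r^{1-\alpha}}{1+(n-1)r^{-\alpha}}$ over $r\ge 1$, with $c_1=1$ and optimum $1$. The case $\alpha=1$ is handled separately at the end. My plan is not to locate the exact maximizer of $f$. Instead I will bound $f(r)-1$ by a simpler function whose maximum has a closed form.

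\emph{Rewriting the excess cost.} Subtract $1$ and multiply numerator and denominator by $r^{\alpha}$:
\begin{align*}
f(r)-1=\frac{(n-1)(r^{1-\alpha}-r^{-\alpha})}{1+(n-1)r^{-\alpha}}=\frac{(n-1)(r-1)}{r^{\alpha}+n-1}.
\end{align*}
Set $m=n-1$. Since $r-1\le r$, we get $f(r)-1\le m\,h(r)$, where $h(r)=\frac{r}{r^{\alpha}+m}$.

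\emph{Maximizing $h$ for $\alpha>1$.} On $(0,\infty)$ the condition $h'(r)=0$ reads $r^{\alpha}+m=\alpha r^{\alpha}$. It has the unique solution $r_*^{\alpha}=m/(\alpha-1)$, which is a global maximum because $h(0)=0$ and $h(r)\to 0$ as $r\to\infty$ when $\alpha>1$. At $r_*$ we have $r_*^{\alpha}+m=\alpha r_*^{\alpha}=\alpha m/(\alpha-1)$, so
\begin{align*}
m\,h(r_*)=\frac{\alpha-1}{\alpha}\,r_*=\frac{\alpha-1}{\alpha}\Bigl(\frac{m}{\alpha-1}\Bigr)^{1/\alpha}=\Bigl(\frac{(\alpha-1)^{\alpha-1}}{\alpha^{\alpha}}\,m\Bigr)^{1/\alpha}.
\end{align*}
Since $\bigl(\frac{\alpha-1}{\alpha}\bigr)^{\alpha-1}\le 1$ and $m<n$, this is at most $(m/\alpha)^{1/\alpha}<(n/\alpha)^{1/\alpha}$. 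Therefore $f(r)\le 1+(n/\alpha)^{1/\alpha}$ for every $r\ge 1$.

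\emph{Case $\alpha=1$.} Lemma~\ref{lemm:dsic-worst-case-cost} excludes $\alpha=1$, so I would argue directly from the general formula $SC(\mathbf c)=\sum_i c_i^{1-\alpha}/\sum_i c_i^{-\alpha}$. At $\alpha=1$ this equals $n/\sum_i c_i^{-1}$. With $c_1=1$ the denominator is at least $1$, so $SC\le n<1+n$, which is the claimed bound.

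\emph{Main obstacle.} The main obstacle is the algebraic simplification of $f(r)-1$ into a form that can be maximized in closed form. The slack introduced by replacing $r-1$ with $r$ is harmless because the target bound is stated as an upper bound. The only remaining checks are routine: $r_*$ must be a maximum of $h$ (not a minimum), and the constant factor $\bigl(\frac{\alpha-1}{\alpha}\bigr)^{(\alpha-1)/\alpha}$ must be at most $1$.
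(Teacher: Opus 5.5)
Your proof is correct, but at the key maximization step it takes a different route from the paper.

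\emph{The paper's route.} The paper keeps $f$ intact. It locates the exact maximizer $x^*$ as the unique root of $B(x)=\alpha+(1-\alpha)x+(n-1)x^{1-\alpha}$. It then uses $B(x^*)=0$ to get the exact value $f(x^*)=1+(n-1)(x^*)^{1-\alpha}/\alpha$. Finally it lower-bounds $x^*\ge (n/(\alpha-1))^{1/\alpha}$ from the sign and monotonicity of $B$.

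\emph{Your route.} You write $f(r)-1=(n-1)(r-1)/(r^{\alpha}+n-1)$, relax $r-1\le r$, and maximize the relaxed function in closed form at $r_*^{\alpha}=(n-1)/(\alpha-1)$. No implicitly defined root is needed.

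\emph{What each buys.} The paper's route retains the exact worst case along the family $[1,r,\ldots,r]$. Its explicit intermediate bound $\frac{n-1}{\alpha}\bigl(\frac{n}{\alpha-1}\bigr)^{(1-\alpha)/\alpha}$ is smaller than your $\frac{\alpha-1}{\alpha}\bigl(\frac{n-1}{\alpha-1}\bigr)^{1/\alpha}$ by the factor $\bigl(\frac{n-1}{n}\bigr)^{1-1/\alpha}$. Yours is shorter and fully explicit, and both relax to the same final constant.

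Your separate treatment of $\alpha=1$ covers a case the paper leaves open. Theorem~\ref{thm:dsic-worst-case} is stated for $\alpha\ge 1$, but Lemma~\ref{lemm:dsic-worst-case-cost} excludes $\alpha=1$ and the paper's proof assumes $\alpha>1$.

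Your relaxation is also robust enough to bypass Lemma~\ref{lemm:dsic-worst-case-cost} entirely. Take any $\mathbf{c}$ with $c_1=1\le c_i$ and set $u_i=c_i^{-\alpha}$. Then $SC(\mathbf{c})-1\le \sum_{i\ge 2}u_i^{1-1/\alpha}\big/\bigl(1+\sum_{i\ge 2}u_i\bigr)$. Concavity of $t\mapsto t^{1-1/\alpha}$ bounds this by $(n-1)h(s)$, where $s^{-\alpha}$ is the average of $u_2,\ldots,u_n$. This avoids that lemma's first-order argument, which presumes a maximizer on an unbounded domain exists and satisfies the first-order conditions with equality.
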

\begin{proof}
From \Cref{lemm:dsic-worst-case-cost}, we have the cost vector $[1, r, r, \ldots, r]$ for some $r>1$. Thus, let
\begin{align*}
    f(r)=\frac{1+(n-1)r^{1-\alpha}}{1+(n-1)r^{-\alpha}}.    
\end{align*}
Then, we seek to maximize $f(r)$ over all $r\ge 1$. Taking the derivative of $f$ with respect to $r$,
\begin{align*}
    f'(r) = -\frac{(n-1)(r^\alpha(r(\alpha-1)-\alpha)-r(n-1))}{r(r^a+n-1)^2}.
\end{align*}
Simplifying $f'(r)=0$ then gives us that the maximum is achieved at $r=x^*$, where $x^*$ is the unique solution to 
\begin{align*}
    B(x)=\alpha+x(1-\alpha)+(n-1)x^{1-\alpha}=0,    
\end{align*}
Now  let $T(r) = (n-1)\,r^{\,1-\alpha}$. Then we have $f(r) = \frac{1+T(r)}{1+T(r)/r}$. Furthermore note, 
\begin{align*}
    B(x^*)=0 \implies T(x^*)+\alpha = x^*(\alpha-1) \implies 1+T(x^*)/x^* = \alpha(1+T(x^*))/(\alpha+T(x^*)).
\end{align*}
Plugging this into our expression for $f(r)$ as a function of $T(r)$ in turn gives 
\begin{align*}
    f(x^*) = \frac{\alpha+T(x^*)}{\alpha} = 1 + \frac{T(x^*)}{\alpha}.
\end{align*}
Then note $T(r)$ is decreasing in $r$ since $\alpha >1$. Thus, a lower bound on $x^*$ gives us an upper bound on $T$ and hence $f(x^*)$. We claim $x^*\ge \left(\frac{n}{\alpha-1}\right)^{\frac{1}{\alpha}}$. To see this note that $B\left(\left(\frac{n}{\alpha-1}\right)^{\frac{1}{\alpha}}\right) > 0$ and $B(x)$ is monotone decreasing in $x$.
Hence we have 
\begin{align*}
    f(x^*) \le 1 + \frac{n-1}{\alpha}\left(\frac{n}{\alpha-1}\right)^{\frac{1-\alpha}{\alpha}} \le 1+ \left(\frac{n}{\alpha}\right)^\frac{1}{\alpha}.
\end{align*}
\hfill \qed
\end{proof}

\subsection{\Cref{sec:PAB} Proofs}\label{app:pab-proofs}

We prove that the worst-case PoA under the paid-as-bid mechanism is realized at the cost vector $[1, C, C, \ldots, C]$. 

\begin{lemma}[Worst-case cost vector]\label{prop:worst-case-pab}
With $n$ bidders where $c_{\max}/c_{\min} \le C$ and $\alpha \le 5$, the worst-case PoA under the paid-as-bid mechanism is realized at the cost vector $[1, C, C, \ldots, C]$.
\end{lemma}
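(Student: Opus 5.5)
We work with the implicit equilibrium characterization from the proof of \Cref{thm:unique_pne}. Normalize $c_1=1\le c_i\le C$; the PoA is then the equilibrium social cost $SC(\mathbf{c})=\sum_i c_i\tilde{x}_i$. Let $f$ be as in (\ref{eq:f_def}) and let $\kappa>0$ be the common value with $c_if(\tilde x_i)=\kappa$ for all $i$. Then $\tilde x_i = f^{-1}(\kappa/c_i)$, and $\kappa$ is pinned down by $\sum_i f^{-1}(\kappa/c_i)=1$. The plan has three steps, mirroring \Cref{lemm:dsic-worst-case-cost}:
\begin{enumerate}
\item Show that at a maximizer every high-cost coordinate is equal.
\item Show that this common value is pushed to the boundary $C$.
\item Show that $c_1$ stays at the minimum.
\end{enumerate}

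For the first step, perturb a single coordinate $c_k$ with $k>1$ and differentiate $SC$. By implicit differentiation of the normalization,
\[
\frac{\partial \tilde x_i}{\partial c_k}=h_i\left(\delta_{ik}\Big(-\frac{\kappa}{c_k^2}\Big)+\frac{1}{c_i}\frac{\partial\kappa}{\partial c_k}\right),
\qquad h_i=(f^{-1})'(\kappa/c_i).
\]
Solving $\sum_i \partial\tilde x_i/\partial c_k=0$ gives
\[
\frac{\partial\kappa}{\partial c_k}=\frac{\kappa h_k/c_k^2}{\sum_i h_i/c_i}.
\]
Substituting, one gets
\[
\frac{\partial SC}{\partial c_k}=\tilde x_k-\frac{\kappa h_k}{c_k}+\frac{\kappa h_k}{c_k^2}\cdot\frac{\sum_i h_i}{\sum_i h_i/c_i}.
\]
The last factor $\bar c=\sum_i h_i\big/\sum_i h_i/c_i$ is a weighted harmonic-type average that is common to all $k$. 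Write $y_k=\kappa/c_k$, so that $\tilde x_k=f^{-1}(y_k)$. The sign of $\partial SC/\partial c_k$ is then that of a single-variable function
\[
\phi(y)=f^{-1}(y)-y\,(f^{-1})'(y)+\frac{y^2}{\kappa}\,\bar c\,(f^{-1})'(y),
\]
evaluated at $y=y_k$.

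The main claim is that, for $\alpha\le5$, $\partial SC/\partial c_k\ge 0$ for every $k>1$ wherever $c_k\ge 1$. This says the social cost is monotone nondecreasing in each non-minimal cost, so the maximum over the box $[1,C]^{n-1}$ sits at the corner $(C,\dots,C)$. This monotonicity is exactly what fails for the DSIC mechanism, where the worst $r$ is interior. Here \Cref{corr:bounded} helps: every player keeps allocation at least on the order of $1/\alpha$ in aggregate, so raising a high cost cannot shed allocation fast enough to lower $SC$.

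To prove the sign claim I would work in the variable $x=\tilde x_k\in(0,1-1/\alpha)$, using $(f^{-1})'(y)=1/f'(x)$ and $y=f(x)$. The needed inequality becomes
\[
x f'(x)-f(x)+\frac{f(x)^2}{\kappa}\,\bar c\ \ge 0,
\]
and it suffices to show $xf'(x)\ge f(x)$, that is, that $f(x)/x$ is nondecreasing. Writing $\log f=\tfrac1\alpha\log x+\log(1-x)-\log(\alpha(1-x)-1)$, we have
\[
\frac{xf'}{f}=\frac1\alpha-\frac{x}{1-x}+\frac{\alpha x}{\alpha(1-x)-1},
\]
and we need this to be at least $1$. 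I expect this to be the main obstacle. Near $x=0$ the expression equals $1/\alpha<1$, so this crude sufficient condition fails for small allocations. The positive term $f^2\bar c/\kappa$ must then be used. Note $\bar c\ge 1=c_1$ and $\kappa = c_k f(x)\ge f(x)$, so I would keep $\bar c$ and bound it below via the low-cost player's weight $h_1$. This gives a one-variable inequality in $x$ and $\alpha$ after eliminating $\kappa$ through $\kappa=f(\tilde x_1)$ with $\tilde x_1\ge$ the others.

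I would reduce the inequality to checking the sign of an explicit rational-in-$x$ expression on $(0,1-1/\alpha)$. The constraint $\alpha\le5$ should enter precisely there, as the range where its numerator has no root, verified by a discriminant or endpoint argument. Finally I would check that $c_1=1$ is indeed the binding minimum. By degree-zero homogeneity, scaling all costs leaves the PoA unchanged, so fixing the ratio bound reduces to the box above.
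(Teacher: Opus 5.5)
Your derivative computation is correct and agrees with the paper's. With $h_i = 1/f'(\tilde x_i)$ one has $\bar c = \kappa\gamma$, where $\gamma = \sum_i \frac{1}{f'(\tilde x_i)} \big/ \sum_i \frac{f(\tilde x_i)}{f'(\tilde x_i)}$. Then $\operatorname{sign}(\partial SC/\partial c_k) = \operatorname{sign}(\gamma - \Psi(\tilde x_k))$ with $\Psi(x) = (f(x) - xf'(x))/f(x)^2$.

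\textbf{The gap.} Your main claim, that $\partial SC/\partial c_k \ge 0$ everywhere on the box, is false once $n \ge 3$. Take $\alpha = 4$, $n = 3$, $\mathbf{c} = [1,1,c]$.
The two unit-cost players each receive $(1-\tilde x_3)/2 \le 1/2 < 1 - 1/\alpha$, so $\kappa \le f(1/2)$ stays bounded and $f(\tilde x_3) = \kappa/c \to 0$. Since $f(x) \sim x^{1/\alpha}/(\alpha-1)$ near $0$, this gives $\tilde x_3 = \Theta(c^{-\alpha})$.
Hence $SC = 1 + (c-1)\tilde x_3 \to 1$ as $c \to \infty$, even though $SC > 1$ for every finite $c > 1$. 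So for large $C$ the social cost is decreasing in $c_3$ somewhere in the box.
In your notation, $\bar c$ is an average of the $c_i$ with weights $h_i/c_i$. The weight on $k$ vanishes as $\tilde x_k \to 0$, so $\bar c/c_k \to 0$, and your target inequality $xf'(x)/f(x) - 1 + \bar c/c_k \ge 0$ degenerates to $1/\alpha - 1 \ge 0$.
No one-variable rational inequality in $(x,\alpha)$ can repair this, because the failure is driven by the other coordinates, not by $\alpha$.
Your appeal to \Cref{corr:bounded} is also misapplied. The cap $\tilde x_1 < 1-1/\alpha$ forces the high-cost players to retain $1/\alpha$ only \emph{collectively}, i.e.\ when they all rise together. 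A single high-cost player can be starved by the remaining low-cost ones.

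\textbf{What is needed.} You must avoid asserting a global sign; the paper instead makes a relative comparison at a maximizer.
First, show that $\Psi$ is strictly decreasing on $(0,1-1/\alpha)$. At a maximizer, let $k$ attain $m = \max_{i \ge 2} c_i$. First-order optimality gives $\gamma \ge \Psi(\tilde x_k)$, whether $m < C$ or $m = C$.
Any $j \ge 2$ with $c_j < m$ has $\tilde x_j > \tilde x_k$, hence $\Psi(\tilde x_j) < \gamma$ and $\partial SC/\partial c_j > 0$. Since $c_j < m \le C$, increasing $c_j$ is feasible, a contradiction. So the maximizer lies on the diagonal $[1,r,\dots,r]$; this argument replaces your step 1.
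Second, prove monotonicity only along that diagonal. With $a = \tilde x_1$ and $b = (1-a)/(n-1)$, one has $SC = r - (r-1)a(r)$. Differentiating $f(a) = rf(b)$ reduces $SC'(r) > 0$ to $(1-a)\frac{f'(a)}{f(a)} + b\frac{f'(b)}{f(b)} > 1 - \frac1r$.
Using $bf'(b)/f(b) \ge 1/\alpha$, the left side is at least $\frac{1}{\alpha a} + \frac{1}{\alpha(1-a)-1} \ge \frac{4}{\alpha-1}$, which is $\ge 1$ exactly when $\alpha \le 5$. This is where the hypothesis $\alpha \le 5$ enters: along the diagonal, where the aggregate $1/\alpha$ guarantee genuinely holds.
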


\begin{proof}
Let $\varphi := f^{-1}$ denote the inverse of $f(x)=x^{1/\alpha}(1-x)/(\alpha(1-x)-1)$ (well-defined since $f$ is strictly increasing on $[0, 1-1/\alpha)$). At the unique equilibrium (\Cref{thm:unique_pne}), allocations satisfy $c_i f(x_i) = c_j f(x_j)$ for all $i,j$. Let $\lambda$ denote this common value, so that $x_i = \varphi(\lambda/c_i)$ and
\begin{equation}\label{eq:feasibility}
\sum_{i=1}^n \varphi(\lambda/c_i) = 1.
\end{equation}
The equilibrium social cost is then $H(\mathbf{c}) = \sum_{i=1}^n c_i \varphi(\lambda/c_i)$, where $\lambda$ is determined implicitly by (\ref{eq:feasibility}). Since the equilibrium allocations are invariant under uniform scaling of $\mathbf{c}$, we normalize $c_1 = 1$ and constrain $1 \le c_i \le C$ for $i \ge 2$ with $\max_i c_i = C$. We show that any maximizer of $H$ over this feasible set must have $c_i = C$ for all $i \ge 2$. We start by deriving the derivative of $H$ with respect to $c_i$ for each $i$. 

Differentiating (\ref{eq:feasibility}) with respect to $c_k$ (for $k \ge 2$) and solving gives
\begin{equation}\label{eq:dlambda}
\frac{\partial \lambda}{\partial c_k} = \frac{\lambda \, \varphi'(\lambda/c_k)/c_k^2}{\sum_{i=1}^n \varphi'(\lambda/c_i)/c_i}.
\end{equation}
Then, differentiating $H(\mathbf{c}) = \sum_i c_i \varphi(\lambda/c_i)$ with respect to $c_k$:
\begin{equation}\label{eq:dH-raw}
\frac{\partial H}{\partial c_k} = x_k + \left(\sum_{i=1}^n \varphi'(\lambda/c_i)\right) \frac{\partial \lambda}{\partial c_k} - \frac{\lambda}{c_k} \varphi'(\lambda/c_k).
\end{equation}
Using $\varphi'(y) = 1/f'(\varphi(y))$ and $\lambda/c_i = f(x_i)$, define
\[
A := \sum_{i=1}^n \frac{1}{f'(x_i)}, \qquad B := \sum_{i=1}^n \frac{f(x_i)}{f'(x_i)}, \qquad \gamma := \frac{A}{B}.
\]
Substituting (\ref{eq:dlambda}) into (\ref{eq:dH-raw}) and using $\gamma = A/B$:
\[
\frac{\partial H}{\partial c_k} = x_k + \gamma \cdot \frac{f(x_k)^2}{f'(x_k)} - \frac{f(x_k)}{f'(x_k)}.
\]
Multiplying both sides by the positive factor $\frac{f'(x_k)}{f(x_k)^2}$:

\begin{align*}
    \frac{\partial H}{\partial c_k} \cdot \frac{f'(x_k)}{f(x_k)^2} &= \frac{x_k f'(x_k)}{f(x_k)^2} + \gamma - \frac{1}{f(x_k)} \\&= \gamma - \left(\frac{1}{f(x_k)} - \frac{x_k f'(x_k)}{f(x_k)^2}\right).
\end{align*}

Now define the function $\Psi(x) := \frac{f(x) - xf'(x)}{f(x)^2}$. Then since $f'(x_k) > 0$ in the equilibrium domain, we conclude
\begin{equation}\label{eq:sign-identity}
\operatorname{sign}\left(\frac{\partial H}{\partial c_k}\right) = \operatorname{sign}\bigl(\gamma - \Psi(x_k)\bigr).
\end{equation}

We claim that given $\alpha>1$, $\Psi$ is strictly decreasing for $x\in(0,1-1/\alpha)$. 

\begin{proof}
A direct computation shows that
\[
\Psi'(x) = \frac{x^{-1/\alpha - 1}}{\alpha^2(1-x)^3} \cdot M(x),
\]
where the prefactor $\frac{x^{-1/\alpha-1}}{\alpha^2(1-x)^3} > 0$ for $x \in (0,1)$ and
\[
M(x) = -(\alpha-1)^2(1-x)^2 - \alpha x\bigl((\alpha-1)(1-x^2) + 2\alpha x\bigr).
\]
For $\alpha > 1$ and $x \in (0,1)$, both $(\alpha-1)^2(1-x)^2 > 0$ and $\alpha x\bigl((\alpha-1)(1-x^2) + 2\alpha x\bigr) > 0$, so $M(x) < 0$ and hence $\Psi'(x) < 0$. \hfill \qed
\end{proof}

Now, note the feasible set is compact and $H$ is continuous in $c$, so a maximizer $\mathbf{c}^\star$ exists. Let $x^*$ be the unique equilibrium allocation induced by $c^*$. By the construction of the feasible set we have $c^*_1=1,1\le c^*_i\le C$ for all $i\ge 2$. Let $m:= \max_{i\ge 2} c_i^*$ and $k\in \text{argmax}_{i\ge 2} c_i^*$. We claim that $c_2^*=\dots=c_n^*=m$. 

If $m-1$ then $c_i^*=1$ for all $i\ge 2$ and we're done, so assume otherwise. If there exists a $j\ge 2$ with $c_j^* <m$, then because $\varphi$ is strictly increasing, we have 

$$\frac{\lambda}{c_j^*} > \frac{\lambda}{m} \implies x^*_j=\varphi(\lambda/c^*_j)>\varphi(\lambda/C)=x^*_k$$

Because $\Psi$ is strictly decreasing on $(0,1-1/\alpha)$, this gives $\Psi(x^*_j)<\Psi(x^*_k)$. 

Now, applying first order optimality constraints, if $m\le C$, then $\frac{\partial H}{\partial c_k}(c^*)\ge 0$ implying $\gamma \ge \Psi(x_k^*)$. Then $\Psi(x_j^*)<\Psi(x_k^*)$ implies $\frac{\partial H}{\partial c_k}(c^*)>0$. But $c_j^* < m < C$ so increasing $c_j^*$ is feasible and strictly increases $H$, contradicting $c^*$ being a maximizer. Hence all $i\ge 2$ must satisfy $c_i^*=m$. Implying $c^*$ is of the form $$c(r)=[1,r,\dots,r], \ \ r\in [1,C].$$

We now show that $H(c(r))$ is strictly increasing for $r\in[1,C]$. Let $x(r)$ correspond to the equilibrium allocation under $c(r)$ with $x_1(r)=a(r)$ and $x_2(r)=\dots=x_n(r)=b(r)$. Since $a+(n-1)b=1$, 
$$H(c(r)) = a(r)+r(n-1)b(r)=r-(r-1)a(r),$$ 
and differentiating gives 
$$H'(r)=1-a(r)-(r-1)a'(r).$$

For the ease of exposition, we omit the reference to $r$ where obvious. From the equilibrium condition we have $f(a)=rf(b)$.  Differentiating with respect to $r$ gives 
$$f'(a)a' = f(b)+rf'(b)b' \text{  with  } b'=-\frac{a'}{n-1}.$$ 
Hence 
$$\left(f'(a)+\frac{r}{n-1}f'(b)\right)a'=f(b),$$
so
$$a'(r)=\frac{f(b(r))}{f'(a(r))+\frac{r}{n-1}f'(b(r))}.$$

Then setting $H'(r)>0$ implies 
$$(1-a)\left(f'(a)+\frac{r}{n-1}f'(b)\right)>(r-1)f(b).$$
Using $f(a)=rf(b)$ this is equivalent to 
\begin{equation}\label{eq:h_ineq}
    (1-a)\frac{f'(a)}{f(a)}+b\frac{f'(b)}{f(b)}>1-\frac{1}{r}
\end{equation}

Hence the lemma follows from showing \cref{eq:h_ineq} holds for $\alpha \le 5$. From the definition of $f(x)$, we get 
\begin{equation*}
    \frac{f'(x)}{f(x)}=\frac{1}{ax}+\frac{1}{(1-x)\left(a(1-x)-1\right)}.
\end{equation*}

Inserting this to \cref{eq:h_ineq} gives 

\begin{align*}
    &(1-a)\frac{f'(a)}{f(a)}=\frac{1-a}{\alpha a}+\frac{1}{\alpha(1-a)-1}, \\  &b\frac{f'(b)}{f(b)}=\frac{1}{\alpha}+\frac{b}{(1-b)(\alpha(1-b)-1)}\ge\frac{1}{\alpha}.
\end{align*}

Therefore, 

$$(1-a)\frac{f'(a)}{f(a)}+b\frac{f'(b)}{f(b)}\ge \underbrace{\left(\frac{1}{\alpha a}+\frac{1}{\alpha(1-a)-1}\right)}_{=:\Phi_\alpha(a)}.$$

Since $a < 1-\frac{1}{\alpha}$, $\Phi_\alpha(a)$ is well-defined and positive. Differentiating gives, 

$$\Phi_{\alpha}(a)=-\frac{1}{\alpha a^2}+\frac{\alpha}{\left((\alpha-1)-\alpha a\right)^2}.$$

Solving for $\Phi_\alpha'(a)=0$ gives $a^*=\frac{\alpha-1}{2\alpha}$, so $$\min_a \Phi_\alpha(a) = \frac{4}{\alpha-1}.$$

Therefore, if $\alpha \le 5$, we have $\frac{4}{\alpha-1}\ge 1$ implying 

$$(1-a)\frac{f'(a)}{f(a)}+b\frac{f'(b)}{f(b)}> \Phi_\alpha(a^*)>1-\frac{1}{r}$$ 

in turn showing $H'(r) > 0$ for all $r>1$ from which the lemma follows. 

\end{proof}

\end{document}